\newtheorem{theorem}{Theorem}
\newtheorem{lemma}{Lemma}
\newtheorem{proposition}{Proposition} 
\newtheorem{corollary}{Corollary}
\newtheorem{definition}{Definition}
\newtheorem{remark}{Remark}
\title{\bf Optimal bounds for parity-oblivious random access codes \quad \\ \quad \\ } 
\date{March 23, 2016}
\author{ 
Andr\'e Chailloux\thanks{
INRIA, Paris Rocquencourt, SECRET Project Team. Email: {\tt andre.chailloux@inria.fr}.}
\and
Iordanis Kerenidis\thanks{
IRIF, Universit\'e Paris Diderot, Paris, France and  
Centre for Quantum Technologies, National University of Singapore, Singapore. 
Email: {\tt jkeren@liafa.univ-paris-diderot.fr}.}
\and
Srijita Kundu\thanks{Chennai Mathematical Institute, Chennai, India. Email: {\tt srijita@cmi.ac.in}.}
\and
Jamie Sikora\thanks{
{Centre for Quantum Technologies, National University of Singapore, and MajuLab, CNRS-UNS-NUS-NTU International Joint Research Unit, UMI 3654, Singapore. 
Email: {\tt cqtjwjs@nus.edu.sg}.}}
}
\newcommand{\E}{\mathbb{E}}
\newcommand{\spa}[1]{\mathcal{#1}}
\newcommand{\Tr}{\mathrm{Tr}}
\newcommand{\half}{\frac{1}{2}}
\newcommand{\calA}{\mathcal{A}}
\newcommand{\calB}{\mathcal{B}}
\newcommand{\tr}{\mathrm{Tr}}
\newcommand{\ket}[1]{| #1 \rangle}
\newcommand{\bra}[1]{\langle #1 |}
\newcommand{\ketbra}[2]{\ket{#1} \bra{#2}}
\newcommand{\kb}[1]{\ketbra{#1}{#1}}
\newcommand{\inner}[2]{\langle #1, #2 \rangle}
\newcommand{\set}[1]{\left\{ #1 \right\}}
\newcommand{\Diag}{\mathrm{Diag}}
\newcommand{\diag}{\mathrm{diag}}
\newcommand{\zo}{\{0,1\}}
\newcommand{\RAC}{\textup{RAC}}
\newcommand{\POM}{\textup{PO} \text{-} {\RAC}}
\newcommand{\EPOM}{\textup{EPO} \text{-} \RAC}
\newcommand{\POMn}{\POM^n}
\newcommand{\EPOMn}{\EPOM^n}
\newcommand{\Ind}{\textup{INDEX}}
\newcommand{\Indn}{\textup{INDEX}^n}
\begin{document}

\maketitle

\begin{abstract}
{Random access coding is an information task that has been extensively studied and found many applications in quantum information. In this scenario, Alice receives an $n$-bit string $x$, and wishes to encode $x$ into a quantum state $\rho_x$, such that Bob, when receiving the state $\rho_x$, can choose any bit $i \in [n]$ and recover the input bit $x_i$ with high probability. Here we study two variants: parity-oblivious random access codes, where we impose the cryptographic property that Bob cannot infer any information about the parity of any subset of bits of the input apart from the single bits $x_i$; and even-parity-oblivious random access codes, where Bob cannot infer any information about the parity of any even-size subset of bits of the input. }
 
{In this paper, we provide the optimal bounds for parity-oblivious quantum random access codes and show that they are asymptotically better than the optimal classical ones. Our results provide a large non-contextuality inequality violation and resolve the main open problem in a work of Spekkens, Buzacott, Keehn, Toner, and Pryde (2009). Second, we provide the optimal bounds for even-parity-oblivious random access codes by proving their equivalence to a non-local game and by providing tight bounds for the success probability of the non-local game via semidefinite programming. In the case of even-parity-oblivious random access codes, the cryptographic property holds also in the device-independent model.}
\end{abstract}

\newpage 
\section{Introduction}  
 
Quantum information theory studies how information is encoded in quantum mechanical systems and how it can be transmitted through quantum channels. A main question is whether quantum information is more powerful than classical information. A celebrated result by Holevo \cite{Hol73} shows that quantum information cannot be used to compress classical information. In high level, in order to transmit $n$ uniformly random classical bits, one needs to transmit no less than $n$ quantum bits. This might imply that quantum information is no more powerful than classical information. This however is wrong in many situations. In the model of communication complexity, one can show that transmitting quantum information may result in exponential savings on the communication needed to solve specific problems (\cite{raz99,BCWdW01,bjk:q1way,GavinskyKKRW08,RegevK11}).

One specific information task that has been extensively studied in quantum information is the notion of \emph{random access codes} ($\RAC$s) \cite{Nay99, ANTV99, ANTV02}. 
In this scenario, Alice receives an $n$-bit string $x$, drawn from the uniform distribution, and wishes to encode $x$ into a quantum state $\rho_x$, such that Bob, when receiving the state $\rho_x$, can choose any bit $i \in [n]$ and recover the input bit $x_i$ with high probability by performing some general quantum operation on $\rho_x$. 

$\RAC$s have been used in various situations in quantum information and computation, including in communication complexity, non-locality, extractors and {device-independent cryptography} 
\cite{Ben-AroyaRW08,IwamaNRY07,PZ10,DeV10,PhysRevA.85.052308}.
Even though this task seems easier than transmitting the entire input string $x$, it is known that the length of quantum $\RAC$s must be at least $\Omega(n)$~\cite{Nay99}. In fact, the length of a classical $\RAC$ can be within a logarithmic additive factor of a quantum $\RAC$~\cite{ANTV99}. 

On the other hand, a well-known example shows the advantages of quantum $\RAC$s by using a single qubit to encode two uniformly random classical bits. In this case, the success of correctly decoding either bit is $\cos^2 (\pi/8)$ \cite{BBBW83, ANTV99} while the optimal classical encoding can achieve an average success probability of $3/4$. An advantage can also be proven for the case of encoding three classical bits into one qubit as shown by Chuang (see~\cite{ANTV02} for details), but not for $n \geq 4$~\cite{HINRY}. 

Nevertheless, a question remained of whether there are variants of $\RAC$s, for which we can have an asymptotically significant advantage in the quantum case. We show that this is indeed the case for the so-called \emph{parity-oblivious} $\RAC$s (denoted here as $\POM$s). These are the usual $\RAC$s with the extra cryptographic property that the receiver cannot infer any information about the parity of any subset of bits of the input, apart from the single bits. 

This cryptographic property means, in particular, that once some information about a bit is learned, then no other information can be extracted about any of the other bits. Such a notion has applications in various areas of cryptography. For example, this is a requirement for a class of classical or quantum protocols known as \emph{symmetric-private information retrieval schemes} (PIR)~\cite{GIKM98, KdW04} where one or more servers have a database $x$, a user chooses an index $i$ and at the end, the user learns $x_i$ but no other bit of $x$, and $i$ remains hidden. A parity-oblivious $\RAC$ satisfies the security conditions of a PIR scheme since the index $i$ remains hidden (the $\RAC$ is non-interactive) and the user cannot learn more than one bit of the database.    

Random access codes that are parity-oblivious have been considered before. For example, the previously mentioned $\RAC$s for encoding two or three classical bits in one qubit have this property. It is not hard to check that for any subset of the inputs of size $2$ or greater, Bob's reduced density matrix is exactly the same for the cases where the parity is $0$ or $1$.  In other words, Bob has no information about the parity. These $\RAC$s violate a \emph{non-contextuality inequality} developed by Spekkens, Buzacott, Keehn, Toner, and Pryde \cite{SBKTP09}. This inequality is discussed further in Subsection~\ref{ssect:NC}. 

{
We will also define a weaker variant called \emph{even-parity-oblivious} $\RAC$s (denoted as $\EPOM$s), where the receiver can infer no information about the parity of any even-size subset of the input. These codes are interesting for two reasons: first, they will let us prove tight upper bounds for $\POM$s; and second, due to their equivalence with a non-local game, their cryptographic property holds in the device independent setting. 
}
 
\subsection{Our results} 

We split our results into four sections. We first present the optimal bounds for parity-oblivious quantum $\RAC$s and even-parity-oblivious $\RAC$s. We then contrast this to the classical case (Subsection~\ref{ssect:classical}), discuss a violation of a {\em non-contextuality inequality} (Subsection~\ref{ssect:NC}), then discuss the security of our optimal quantum $\RAC$s in the device-independent model (Subsection~\ref{ssect:DI2}). 
 
\subsubsection{Quantum random access codes and cryptographic security definitions} 
  
Formally, a quantum $\RAC$ of $n$ classical bits is simply a set of quantum states $\{ \rho_x : x \in \zo^n \}$. We suppose Alice chooses $x \in \zo^n$ uniformly at random, prepares the state $\rho_x$, and sends $\rho_x$ to Bob who has a POVM $\{ M^t_0, M^t_1 \}$ where {the subscript $b$ of $M^t_b$} serves as his guess for the $t$-th bit of $x$, denoted $x_t$, for each index ${t \in [n] := \{ 1, \ldots, n \}}$. Suppose that $x_t$ can be decoded with success probability $\frac{1}{2}(1+\alpha_t)$. Then we say that the \emph{bias}, or \emph{worst-case bias}, of the $\RAC$ is 
\[ \min_{t \in [n]} \alpha_t \]  
and the \emph{average-case bias} is 
\[ \mathop{\E}_{t \sim \mu([n])} \alpha_t, \] 
where $\mu$ is the uniform probability distribution.

We consider two cryptographic variants of quantum $\RAC$s in this paper. We are concerned with designing quantum $\RAC$s which \emph{hide} some information about the encoded string $x$ from a potentially cheating Bob. By information being {hidden}, we mean that there exists no measurement which yields a correct guess with probability greater than that of randomly guessing. In particular, we consider the case where Bob cannot learn the value of 
\[ x_S := \bigoplus_{i \in S} x_i, \]  
for certain choices of subset $S \subseteq [n]$,
of Alice's encoded string $x$. We call the value $x_S$ the \emph{$S$-parity of the string $x$.}
  
\begin{definition}[Parity-oblivious and even-parity-oblivious quantum $\RAC$s] 
We say that a quantum $\RAC$ $\{ \rho_x : x \in \zo^n \}$ is \emph{parity-oblivious}, denoted $\POMn$, if the receiver can infer no information about $x_S$ for any subset $S \subseteq [n]$ of size $2$ or greater, when $x$ is chosen uniformly at random. In other words, for all $S \subseteq [n]$ of size $2$ or greater, we have 
\[ 
\frac{1}{2^{n-1}} \sum_{x \, : \, x_S = 0} \rho_x 
= 
\frac{1}{2^{n-1}} \sum_{x \, : \, x_S = 1} \rho_x. 
\] 
We say that a quantum $\RAC$ is \emph{even-parity-oblivious}, denoted $\EPOMn$, if the receiver can infer no information about $x_S$ for any subset $S \subseteq [n]$ of \emph{even} size, $2$ or greater. 
\end{definition} 

Note that the usual treatment of $\RAC$s is to analyze the relationships between the number of encoded bits $n$, the bias $\alpha$, and the encoding dimension of $\rho_x$. Here, we are not concerned with the encoding dimension, but rather the ability to achieve cryptographic security in terms of parity-obliviousness. 
 
In this paper, we present the optimal bias for a quantum $\POMn$ and show that they perform asymptotically better than the optimal classical version. 

\begin{theorem}[Optimal quantum parity-oblivious random access codes] \label{main}
For any {integer $n \geq 2$}, a quantum parity-oblivious random access code of $n$ bits has worst-case bias at most ${1}/{\sqrt{n}}$. Moreover, this bound can be achieved using $\lfloor n/2 \rfloor$ qubits. 
\end{theorem}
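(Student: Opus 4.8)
The plan is to first use parity-obliviousness to pin down the algebraic form of the code, then bound the worst-case bias by an averaging argument over the $n$ coordinates, and finally exhibit a matching construction built from anticommuting observables.

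\medskip
\noindent\textbf{Step 1 (structure of a $\POMn$).} Expanding each $\rho_x$ over the Fourier characters of $\zo^n$, write $\rho_x = \sum_{S \subseteq [n]} (-1)^{x_S}\, \hat\rho_S$ with Hermitian coefficients $\hat\rho_S = 2^{-n}\sum_{x}(-1)^{x_S}\rho_x$. The parity-obliviousness condition for a subset $S$ is precisely $\hat\rho_S = 0$, so a $\POMn$ is exactly an affine family $\rho_x = \sigma + \sum_{i=1}^n (-1)^{x_i}\Delta_i$, where $\sigma := \hat\rho_\emptyset$ is a density matrix, each $\Delta_i := \hat\rho_{\{i\}}$ is traceless Hermitian, and the only remaining constraint is $\rho_x \succeq 0$ for all $2^n$ sign patterns. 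For Bob decoding the $t$-th bit with a two-outcome POVM $\{M^t_0, M^t_1\}$, put $E_t := M^t_0 - M^t_1$ (so $-I \preceq E_t \preceq I$); a direct computation gives $\alpha_t = \Tr(E_t \Delta_t) \le \norm{\Delta_t}_1$, with equality when $E_t$ is the sign of $\Delta_t$. Hence it suffices to prove $\sum_{t=1}^n \norm{\Delta_t}_1 \le \sqrt{n}$, since then $\min_t \alpha_t \le \frac1n\sum_t \norm{\Delta_t}_1 \le 1/\sqrt n$.

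\medskip
\noindent\textbf{Step 2 (the key estimate).} Because $\rho_x + \rho_{\bar x} = 2\sigma$, every $\rho_x$, and hence every $\Delta_i$, is supported inside $\supp(\sigma)$, so we may restrict to that support, assume $\sigma$ invertible, and set $\tilde\Delta_i := \sigma^{-1/2}\Delta_i\,\sigma^{-1/2}$. Conjugating the positivity constraints by $\sigma^{-1/2}$ makes them $\norm{\sum_i \epsilon_i \tilde\Delta_i}_\infty \le 1$ for every $\epsilon \in \{\pm1\}^n$. Averaging the square of $\sum_i \epsilon_i \tilde\Delta_i$ over uniform $\epsilon$ annihilates the cross terms, which yields the crucial operator bound
\[ \sum_i \tilde\Delta_i^2 \;=\; \E_{\epsilon}\Big(\sum_i \epsilon_i \tilde\Delta_i\Big)^{2} \;\preceq\; I. \]
A Cauchy--Schwarz/Jensen argument upgrades this to $\sum_i |\tilde\Delta_i| \preceq \sqrt{n}\, I$: for any unit vector $u$ one has $\inner{u}{|\tilde\Delta_i|u} \le \inner{u}{\tilde\Delta_i^2 u}^{1/2}$, so $\sum_i \inner{u}{|\tilde\Delta_i|u} \le \sqrt{n}\,\big(\sum_i \inner{u}{\tilde\Delta_i^2 u}\big)^{1/2} \le \sqrt{n}$. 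Finally, splitting $M$ into positive and negative parts gives the trace-norm inequality $\norm{\sigma^{1/2} M \sigma^{1/2}}_1 \le \Tr(\sigma\,|M|)$, whence
\[ \sum_{t=1}^n \norm{\Delta_t}_1 \;=\; \sum_{t=1}^n \norm{\sigma^{1/2}\tilde\Delta_t\sigma^{1/2}}_1 \;\le\; \Tr\!\Big(\sigma \sum_{t=1}^n |\tilde\Delta_t|\Big) \;\le\; \sqrt{n}\,\Tr(\sigma) \;=\; \sqrt{n}. \]
The delicate point is exactly this step: $\sum_i \tilde\Delta_i^2 \preceq I$ by itself only yields $\norm{\Delta_t}_1 \le 1$ per coordinate, hence the far-too-weak bound $\sum_t \norm{\Delta_t}_1 \le n$; the gain to $\sqrt n$ comes from controlling the \emph{operator} $\sum_t |\tilde\Delta_t|$ as a whole, not term by term. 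One should also double-check the reduction to invertible $\sigma$ and confirm that nothing beyond $\rho_x \succeq 0$ over all sign patterns --- all that parity-obliviousness leaves available --- is used.

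\medskip
\noindent\textbf{Step 3 (matching construction).} On $(\C^2)^{\otimes m}$ with $m := \lfloor n/2 \rfloor$, choose Hermitian operators $\Gamma_1, \dots, \Gamma_n$ that square to $I$ and pairwise anticommute; the standard Clifford (Dirac matrix) construction supplies $2m+1 \ge n$ of these. Set $\rho_x := 2^{-m}\big(I + \tfrac{1}{\sqrt n}\sum_{i=1}^n (-1)^{x_i}\Gamma_i\big)$, so that $\sigma = 2^{-m} I$ and $\Delta_i = 2^{-m} n^{-1/2}\Gamma_i$. Anticommutation gives $\big(\sum_i (-1)^{x_i}\Gamma_i\big)^2 = nI$, so $\norm{\sum_i(-1)^{x_i}\Gamma_i}_\infty = \sqrt n$ and each $\rho_x \succeq 0$ is a valid state; parity-obliviousness is automatic from the affine form in the $(-1)^{x_i}$. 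Decoding the $t$-th bit by measuring the $\pm1$ observable $\Gamma_t$ succeeds with bias $\Tr(\Gamma_t \Delta_t) = 2^{-m} n^{-1/2}\Tr(\Gamma_t^2) = 1/\sqrt n$, so this $\POMn$ has worst-case bias exactly $1/\sqrt n$ using only $\lfloor n/2 \rfloor$ qubits, matching the upper bound.
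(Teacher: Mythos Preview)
Your argument is correct, and the upper bound takes a genuinely different route from the paper. The paper proves the bound indirectly: it shows that any even-parity-oblivious $\RAC$ (hence any $\POM$) yields a strategy for the $\Indn$ non-local game with the same average bias, and then computes the optimal quantum value of that XOR game via the Tsirelson SDP to obtain $1/\sqrt n$. Your proof stays entirely on the encoding side: the Fourier reduction to the affine form $\rho_x=\sigma+\sum_i(-1)^{x_i}\Delta_i$, the averaging identity $\sum_i\tilde\Delta_i^2\preceq I$ from positivity over all sign patterns, and then the Jensen/Cauchy--Schwarz upgrade to $\sum_i|\tilde\Delta_i|\preceq\sqrt n\,I$ combined with $\norm{\sigma^{1/2}M\sigma^{1/2}}_1\le\Tr(\sigma|M|)$ to control $\sum_t\norm{\Delta_t}_1$. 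This is more self-contained and elementary---no game reformulation, no SDP duality---and makes transparent exactly which operator inequalities are doing the work. The price is that, as written, your argument uses full parity-obliviousness (all $\hat\rho_S=0$ for $|S|\ge 2$), whereas the paper's game-based route actually proves the same $1/\sqrt n$ bound for the larger class of \emph{even}-parity-oblivious codes; extending your direct approach to that setting would require handling nonzero odd Fourier coefficients.

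Your construction in Step~3 is essentially the same object as the paper's (a Clifford/anticommuting-observables code), just presented more directly as a density matrix rather than as the post-measurement state of an EPR-based protocol. The verification of positivity, parity-obliviousness, and bias $1/\sqrt n$ is correct, and $2\lfloor n/2\rfloor+1\ge n$ guarantees enough anticommuting generators.
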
 

This is in contrast to the classical setting where the optimal  \emph{average-case} bias is provably $1/n$~\cite{SBKTP09} (discussed further in Subsection~\ref{ssect:classical}). 
 
The main idea of the proof of the upper bound is that quantum encodings can be studied through their close relationship to non-local games. Such connections
were 
noted in \cite{OW10} and in \cite{CKS14} it was shown that certain non-local  
games are equivalent to quantum 
encodings in the sense that the optimal average decoding probability is equal to the success probability of the non-local game.  
 
In a non-local game, two non-communicating parties, Alice and Bob, receive some inputs $s$ and $t$, respectively, according to some probability distribution known to Alice and Bob, and must output $a$ and $b$, respectively, such that $(s,t,a,b)$ satisfy some specific condition. For example, in the CHSH game, the condition is $a \oplus b = s \cdot t$. The goal is to find the optimal quantum (resp. classical) success probability of satisfying the condition when Alice and Bob are allowed to share some initial quantum state (resp. shared randomness). 

We now define a very natural non-local game called the $\Ind$ game which {we} use in the analysis in this paper. 

\begin{definition}[$\Ind$ game] 
The \emph{$\Indn$ game}, parameterized by $n$, is the following non-local game:  
\begin{itemize} 
\item Alice's input: Alice receives a random $s$ from the set $S := \zo^{n}$. 
\item Bob's input: Bob receives a random index $t$ from the set $T := [n]$. 
\item Winning condition: They win if Alice's output bit $a$ and Bob's output bit $b$ satisfy $a \oplus b = s_t$. 
\end{itemize} 
\end{definition} 
The choice of initial resource state and local measurement operators (that depend on the respective inputs) comprise a \emph{strategy}. 
We say that a strategy for the $\Indn$ game has \emph{bias} $\alpha$ if 
\[ \mathop{\E}_{s \sim \mu(\zo^n)} \mathop{\E}_{t \sim \mu([n])} \Pr[\textup{Alice's output } a \textup{ and Bob's output } b \textup{ satisfy } a \oplus b = s_t] = \half (1 + \alpha). \] 

We 
show that even-parity-oblivious $\RAC$s with \emph{average-case} bias are equivalent to the $\Ind$ game. In other words, any $\Ind$ game strategy with bias $\alpha$ yields  an even-parity-oblivious $\RAC$ with \emph{average-case} bias $\alpha$ and vice versa.

\begin{theorem}[Equivalence]  
\label{thm:Equiv}
For any $n \in \mathbb{N}$, there exists a quantum even-parity-oblivious $\RAC$ of $n$ 
bits with \emph{average-case} bias $\alpha$ if and only if there exists a quantum $\Indn$ strategy with bias $\alpha$. 
\end{theorem}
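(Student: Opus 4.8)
The conceptual starting point is a reformulation of even-parity-obliviousness. Writing $\overline{x}$ for the bitwise complement of $x$ and $\overline{\rho} := 2^{-n}\sum_{y}\rho_y$ for the average state, I would first observe that a quantum $\RAC$ $\{\rho_x\}$ is even-parity-oblivious if and only if $\rho_x + \rho_{\overline{x}} = 2\overline{\rho}$ for every $x$. This is a one-line Fourier computation: if $\hat\rho(S) := 2^{-n}\sum_x (-1)^{x_S}\rho_x$, then $\rho_x + \rho_{\overline{x}} = 2\sum_{S\text{ even}}(-1)^{x_S}\hat\rho(S)$, which equals $2\hat\rho(\emptyset) = 2\overline{\rho}$ for all $x$ exactly when $\hat\rho(S) = 0$ for every even nonempty $S$, i.e. exactly when $\{\rho_x\}$ is even-parity-oblivious. (The same bookkeeping shows that the bias, on either side, sees the measurement data only through single-coordinate Fourier coefficients, which is the structural reason the two quantities can match.)

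For the implication from $\EPOMn$ to $\Indn$: given an even-parity-oblivious code $\{\rho_x\}$ with decoders $\{M^t_0,M^t_1\}$ and average-case bias $\alpha$, let $\ket{\psi}_{AB}$ be a purification of $\overline{\rho}$ with Bob holding the system the $\rho_x$ live on. By the reformulation above, for each $s$ the pair $\{\tfrac12\rho_s,\tfrac12\rho_{\overline{s}}\}$ is an ensemble decomposition of $\overline{\rho}$, so the Hughston--Jozsa--Wootters steering theorem supplies a two-outcome POVM $\{F^s_0,F^s_1\}$ on $A$ with $\Tr_A[(F^s_0\otimes I)\kb{\psi}] = \tfrac12\rho_s$ and $\Tr_A[(F^s_1\otimes I)\kb{\psi}] = \tfrac12\rho_{\overline{s}}$. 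In the $\Indn$ strategy, Alice on input $s$ applies $\{F^s_a\}$ and outputs the outcome $a$, and Bob on input $t$ applies $\{M^t_b\}$ and outputs $b$; conditioned on $a$, Bob holds $\rho_s$ (if $a=0$) or $\rho_{\overline{s}}$ (if $a=1$), each with probability $\tfrac12$. The winning probability on input $(s,t)$ is then $\tfrac12\Tr[M^t_{s_t}\rho_s] + \tfrac12\Tr[M^t_{(\overline{s})_t}\rho_{\overline{s}}]$, and averaging over $s$ (re-indexing $s\mapsto\overline{s}$ in the second summand) collapses this to $2^{-n}\sum_s\Tr[M^t_{s_t}\rho_s]$; averaging over $t$ gives exactly $\half(1+\alpha)$.

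For the converse, start from an $\Indn$ strategy with shared state $\ket{\psi}_{AB}$, Alice POVMs $\{A^s_0,A^s_1\}$, Bob POVMs $\{B^t_0,B^t_1\}$ and bias $\alpha$. First apply a standard symmetrization: adjoin one more EPR pair, let each party measure it to obtain a shared bit $c$, and have both parties XOR $c$ into their outputs. This leaves the winning condition $a\oplus b = s_t$ (hence $\alpha$) unchanged but makes Alice's output uniform, $\Tr[(A^s_0\otimes I)\kb{\psi}] = \tfrac12$ for all $s$. Now define $\rho_x := \Tr_A\big[\big((A^x_0 + A^{\overline{x}}_1)\otimes I\big)\kb{\psi}\big]$. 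It is positive; it has trace $\Tr[(A^x_0\otimes I)\kb{\psi}] + \Tr[(A^{\overline{x}}_1\otimes I)\kb{\psi}] = \tfrac12+\tfrac12 = 1$; and it is even-parity-oblivious since for even nonempty $S$ one gets $\sum_x(-1)^{x_S}(A^x_0 + A^{\overline{x}}_1) = 2^n(\hat A_0(S) + \hat A_1(S)) = 2^n\,\widehat{A_0 + A_1}(S) = 2^n\,\widehat{I}(S) = 0$. Letting Bob decode bit $t$ with $\{B^t_b\}$, the success probability on $(x,t)$ is $\Tr[(A^x_0\otimes B^t_{x_t})\kb{\psi}] + \Tr[(A^{\overline{x}}_1\otimes B^t_{x_t})\kb{\psi}]$; its two terms are respectively the $a=0$ contribution to the game's winning probability on $(x,t)$ and the $a=1$ contribution on $(\overline{x},t)$ (here $(\overline{x})_t = 1\oplus x_t$), so averaging over $x$ and re-indexing $x\mapsto\overline{x}$ in the second piece reconstitutes $\half(1+\alpha)$.

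The delicate step is the converse. The naive attempt --- on input $x$ Alice measures her half and sends Bob the post-measurement state together with her classical outcome --- reproduces the bias but generically destroys parity-obliviousness, because Bob's conditional states $\Tr_A[(A^x_a\otimes I)\kb{\psi}]$ can carry even-parity information. The fix is to fold the $a=0$ branch on input $x$ together with the $a=1$ branch on input $\overline{x}$: this annihilates every even Fourier coefficient of Alice's operators (because $A^s_0 + A^s_1 = I$), while the game's winning condition $a\oplus b = s_t$ is exactly what keeps the bias invariant under that folding, and the preliminary normalization making Alice's output uniform is what turns the folded object into a genuine density operator. By comparison, the forward direction is routine once one has spotted the Fourier reformulation, since even-parity-obliviousness is then visibly the statement that $\{\tfrac12\rho_s,\tfrac12\rho_{\overline{s}}\}$ can be steered from a purification of the average state.
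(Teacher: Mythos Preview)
Your proof is correct and shares the paper's core idea in both directions---fold the branch $(s,a=0)$ together with $(\bar s,a=1)$ and exploit that even parities are invariant under $s\mapsto\bar s$---but you streamline the $\EPOM\to\Ind$ direction. The paper constructs the explicit purification $\ket{\Omega_s}=\frac{1}{\sqrt2}(\ket0\ket{\psi_s}+\ket1\ket{\psi_{\bar s}})$, argues that Bob's marginal $\tfrac12(\rho_s+\rho_{\bar s})$ hides \emph{all} parities of $s$, and then proves a separate lemma (via a Fourier/contradiction argument) that hiding all parities forces these marginals to coincide for all $s$, after which unitary equivalence of purifications supplies Alice's operations. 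Your one-line identity $\rho_x+\rho_{\bar x}=2\sum_{|S|\text{ even}}(-1)^{x_S}\hat\rho(S)=2\bar\rho$ establishes that coincidence directly and replaces the explicit purification step by an appeal to the (mixed-ensemble form of) HJW steering; this is shorter, though the paper's construction has the compensating virtue of exhibiting the shared state and Alice's measurement concretely. In the converse direction your $\rho_x=\Tr_A\big[((A^x_0+A^{\bar x}_1)\otimes I)\kb{\psi}\big]$ is exactly the paper's $\sigma_x=\tfrac12\rho_{x,0}+\tfrac12\rho_{\bar x,1}$, and your operator-level vanishing $\sum_x(-1)^{x_S}(A^x_0+A^{\bar x}_1)=\sum_x(-1)^{x_S}I=0$ for even nonempty $S$ is the Fourier dual of the paper's no-signalling computation.
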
 
 
Noting that the $\Ind$ game is an XOR game, i.e., the winning condition depends only on the XOR of Alice and Bob's one-bit answers, we use a tight semidefinite programming characterization~\cite{CSUU08} to provide the exact optimal quantum bias. 

\begin{theorem}[Optimal quantum $\Ind$ game bias] \label{thm:QGameBias}
For any $n \in \mathbb{N}$, the optimal quantum bias of an $\Indn$ strategy is $1/\sqrt{n}$.
\end{theorem}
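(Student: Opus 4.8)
The plan is to exploit the fact that $\Indn$ is an XOR game and apply the Tsirelson-type semidefinite programming characterization of Cleve, Slofstra, Unger, and Upadhyay~\cite{CSUU08}. Recall that for an XOR game with cost matrix, the optimal quantum bias equals a certain semidefinite program, which in turn (after normalization) equals the optimal value of a vector program: Alice assigns to each input $s \in \zo^n$ a unit vector $u_s$, Bob assigns to each input $t \in [n]$ a unit vector $v_t$, and the bias is
\[
\max \; \mathop{\E}_{s \sim \mu(\zo^n)} \mathop{\E}_{t \sim \mu([n])} (-1)^{s_t} \inner{u_s}{v_t},
\]
where the sign $(-1)^{s_t}$ comes from the winning condition $a \oplus b = s_t$. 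First I would write down this vector program carefully, tracking the exact $1/2(1+\alpha)$ normalization so that the ``bias'' here matches the definition of bias for the $\Ind$ game given above.

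Next I would prove the upper bound $\alpha \le 1/\sqrt n$. Fix unit vectors $\{u_s\}$ and $\{v_t\}$. The key observation is that $\mathop{\E}_{s}(-1)^{s_t} u_s$ is, up to the factor $\tfrac12$, the difference of the average of $u_s$ over $s$ with $s_t = 0$ and the average over $s$ with $s_t = 1$; call this vector $w_t$. Then the objective is $\mathop{\E}_t \inner{w_t}{v_t} \le \mathop{\E}_t \norm{w_t}$ by Cauchy--Schwarz (taking $v_t = w_t/\norm{w_t}$ is optimal), and then $\mathop{\E}_t \norm{w_t} \le \sqrt{\mathop{\E}_t \norm{w_t}^2}$ by concavity of the square root. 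The heart of the argument is the identity $\sum_{t=1}^n \norm{w_t}^2 \le 1$, which should follow from expanding $\norm{w_t}^2 = \mathop{\E}_{s,s'} (-1)^{s_t}(-1)^{s'_t}\inner{u_s}{u_{s'}}$, summing over $t$, and using that $\sum_t (-1)^{s_t + s'_t}$ is a sum of $\pm1$ terms that exhibits cancellation unless $s = s'$ — more precisely, I expect $\mathop{\E}_{s,s'}\big[\inner{u_s}{u_{s'}} \sum_t (-1)^{s_t+s'_t}\big]$ to reduce to $\mathop{\E}_s \norm{u_s}^2 = 1$ after the off-diagonal contributions are bounded (using $\inner{u_s}{u_{s'}} \le 1$ and a Fourier/Parseval count on $\zo^n$). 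Combining, $\alpha \le \sqrt{\mathop{\E}_t \norm{w_t}^2} = \sqrt{(1/n)\sum_t \norm{w_t}^2} \le \sqrt{1/n}$.

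For the matching lower bound I would exhibit an explicit strategy — equivalently, by Theorem~\ref{thm:Equiv}, an explicit even-parity-oblivious $\RAC$ of $n$ bits with average-case bias $1/\sqrt n$; this is presumably the same construction on $\lfloor n/2\rfloor$ qubits referenced in Theorem~\ref{main}, so it suffices to verify that its vectors $u_s, v_t$ achieve equality in the chain above (i.e.\ $\norm{w_t}^2 = 1/n$ for every $t$ and $v_t \parallel w_t$). The main obstacle I anticipate is the combinatorial identity $\sum_t \norm{w_t}^2 \le 1$: one must handle the cross terms $\inner{u_s}{u_{s'}}$ for $s \ne s'$ correctly rather than naively, since $\sum_t (-1)^{s_t + s'_t}$ can be as large as $n-2$, and controlling these requires either a clean Fourier-analytic rewriting over the group $\zo^n$ or a direct positive-semidefiniteness argument on the Gram matrix of the $u_s$. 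Getting the constants exactly right there, so that the bound is $1/\sqrt n$ and not something weaker, is where the real work lies.
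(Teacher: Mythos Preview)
Your approach is correct and will go through. The key inequality $\sum_{t=1}^{n}\norm{w_t}^2 \le 1$ that you flag as the main obstacle is indeed the crux, and your instinct about how to prove it is right: writing $w_t = 2^{-n}\sum_s (-1)^{s_t} u_s$ and letting $G = (\inner{u_s}{u_{s'}})_{s,s'}$ be the Gram matrix, one has $\norm{w_t}^2 = 4^{-n}\chi_{\{t\}}^\top G\,\chi_{\{t\}}$ with $\chi_S(s) = (-1)^{s_S}$. Since $G \succeq 0$, each $\chi_S^\top G\,\chi_S \ge 0$, and summing over \emph{all} $S \subseteq [n]$ gives $2^n \tr(G) = 4^n$ by orthogonality of the characters; the partial sum over singletons is therefore at most $4^n$, yielding $\sum_t \norm{w_t}^2 \le 1$. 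So the cross terms are controlled not by bounding $\inner{u_s}{u_{s'}}$ individually but by the PSD/Parseval argument you outline at the end.

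The paper takes a different, shorter route. It writes the CSUU08 SDP in primal--dual form with cost matrix $A_{s,t} = (-1)^{s_t}/(n\,2^n)$, then exhibits an explicit primal feasible $X = YY^\top$ with $\inner{B}{X} = 1/\sqrt{n}$ and an explicit dual feasible $y$ with $\inner{e}{y} = 1/\sqrt{n}$. The dual feasibility check reduces to the single matrix identity $A^\top A = \tfrac{1}{4 n^2 2^n} I$, which is exactly the orthogonality of the singleton characters $\chi_{\{t\}}$---the same Fourier fact that drives your Parseval bound, but packaged so that no inequalities (Cauchy--Schwarz, Jensen) are needed. Your argument has the advantage of not requiring one to guess the dual certificate; the paper's has the advantage of being a two-line verification once the right $X$ and $y$ are written down. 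For the lower bound, the paper's explicit primal $X$ is self-contained within this theorem, whereas your plan to invoke the $\lfloor n/2\rfloor$-qubit $\RAC$ via Theorem~\ref{thm:Equiv} works but imports a later construction.
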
 

The above two theorems imply the optimal bounds for even-parity-oblivious random access codes.

\begin{corollary}[Optimal quantum even-parity-oblivious random access codes] \label{eporac}
For any {integer $n \geq 2$}, a quantum even-parity-oblivious random access code of $n$ 
bits has average-case bias at most ${1}/{\sqrt{n}}$. Moreover, this bound can be achieved using $\lfloor n/2 \rfloor$ qubits. 
\end{corollary}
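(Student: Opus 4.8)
The plan is to read Corollary~\ref{eporac} off directly from Theorem~\ref{thm:Equiv} and Theorem~\ref{thm:QGameBias} for the bound, and to settle the dimension claim by exhibiting one explicit optimal code.

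For the upper bound: suppose $\{\rho_x : x \in \zo^n\}$ is a quantum even-parity-oblivious $\RAC$ of $n$ bits with average-case bias $\alpha$. By the forward direction of Theorem~\ref{thm:Equiv} this produces a quantum $\Indn$ strategy with bias $\alpha$, and Theorem~\ref{thm:QGameBias} forces $\alpha \le 1/\sqrt{n}$. So the only remaining content is that the bound is tight and attainable on $2^{\lfloor n/2 \rfloor}$-dimensional states.

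For attainability, let $k = \lfloor n/2 \rfloor$ and fix operators $A_1, \dots, A_n$ on $k$ qubits that are Hermitian, unitary, traceless, and pairwise anticommuting; these exist because the Majorana/Clifford construction supplies $2k+1 \ge n$ such operators on $k$ qubits. Set $\rho_x = 2^{-k}\bigl(I + \tfrac{1}{\sqrt n}\sum_{i=1}^n (-1)^{x_i} A_i\bigr)$ for $x \in \zo^n$, and let Bob decode index $t$ using the POVM $\{M^t_0, M^t_1\}$ with $M^t_b = \tfrac12(I + (-1)^b A_t)$ (the eigenprojectors of $A_t$). I would then check three things. First, each $\rho_x$ is a genuine state: anticommutativity gives $\bigl(\sum_i (-1)^{x_i} A_i\bigr)^2 = \sum_i A_i^2 = nI$, so $\tfrac{1}{\sqrt n}\sum_i (-1)^{x_i} A_i$ has operator norm $1$, whence $\rho_x \succeq 0$ and $\Tr(\rho_x)=1$. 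Second, $\Tr(M^t_{x_t}\rho_x) = \tfrac12\bigl(1 + \tfrac{1}{\sqrt n}\bigr)$: in the expansion every cross term dies because $\Tr(A_i)=0$ and $\Tr(A_tA_i)=0$ for $i\ne t$, while $\Tr(A_t^2)=2^k$; hence the worst-case, and therefore average-case, bias is exactly $1/\sqrt n$. Third, even-parity-obliviousness: for $S \subseteq [n]$ with $|S|\ge 2$ and any $i\in[n]$, the sum $\sum_{x : x_S = 0}(-1)^{x_i}$ vanishes, via the fixed-point-free sign-reversing involution on $\{x : x_S=0\}$ that flips $x_i$ (when $i\notin S$) or flips $x_i$ together with some other coordinate of $S$ (when $i\in S$); consequently $\sum_{x : x_S=0}\rho_x = 2^{n-1-k}I = \sum_{x : x_S=1}\rho_x$, so the code is in fact parity-oblivious, in particular even-parity-oblivious.

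The upper bound is essentially free from the two quoted theorems, so the main (and only mildly involved) obstacle is the bookkeeping in the attainability construction — and even that reduces to standard facts about pairwise anticommuting Paulis. A slicker but less transparent route would be to extract an optimal measurement strategy directly from the optimal semidefinite programming solution underlying Theorem~\ref{thm:QGameBias} and push it through the constructive direction of Theorem~\ref{thm:Equiv}; I prefer the explicit anticommuting-observable code above because it simultaneously serves as the witness for the achievability claim in Theorem~\ref{main}. Since $\lfloor n/2\rfloor$ qubits already meet the bound, nothing further needs optimizing, and the proof is complete.
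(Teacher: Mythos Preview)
Your proof is correct. The upper bound argument is exactly the paper's: combine Theorem~\ref{thm:Equiv} with Theorem~\ref{thm:QGameBias}.

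For achievability, you and the paper use the same anticommuting-observable machinery (your $A_i$ are the paper's $G_{n,i}$), but the routes differ in presentation. The paper first builds an $\Indn$ strategy on EPR pairs (Alice measures with $\tfrac{1}{\sqrt n}\sum_i(-1)^{x_i}G_{n,i}$, Bob with $G_{n,t}^{\top}$), ships the outcome as a classical bit, proves parity-obliviousness via the hyperbit inequality $\sum_S \alpha_S^2\le 1$, and only then removes the classical bit to land on the $\lfloor n/2\rfloor$-qubit code $\sigma_{x}$. You instead write down the resulting density matrices $\rho_x=2^{-k}\bigl(I+\tfrac{1}{\sqrt n}\sum_i(-1)^{x_i}A_i\bigr)$ from the outset and verify positivity, decoding bias, and parity-obliviousness by direct trace computations and a sign-reversing involution. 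Your route is shorter and self-contained for the Corollary; the paper's detour through the game strategy is what later enables the device-independence discussion in Subsection~\ref{ssect:DI}, so each presentation is tailored to its downstream use. The two encodings coincide (up to a transpose convention), so there is no substantive divergence.
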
 
 
Since the \emph{worst-case} bias of a quantum $\POM$ is obviously upper bounded by the optimal \emph{average-case} bias of a quantum $\RAC$ hiding \emph{only} the even parities, Theorems \ref{thm:Equiv} and \ref{eporac} show that every $\POMn$ 
has bias at most $1/\sqrt{n}$. 

To prove this upper bound is tight, we give an explicit construction of a quantum $\POM$ of $n$ bits with bias $1/\sqrt{n}$ that uses $\lfloor n/2 \rfloor$ qubits and $1$ classical bit. 
This $\RAC$ is based on the notion of \emph{hyperbits}~\cite{PW12} and a proof of Tsirelson's Theorem~\cite{Tsi87}. We then discuss how to remove the classical bit and make it \emph{device-independent} (see Subsection~\ref{ssect:DI2} for more details about the device-independent model). 
 
We remark that parity-oblivious and even-parity-oblivious quantum $\RAC$s 
both share the same worst-case and average-case bias of $1/\sqrt{n}$. However, the same is not true if we consider \emph{odd-parity-oblivious} $\RAC$s where the parities are hidden for only odd-size subsets (greater or equal to $3$). Consider encoding a six-bit string $(x_1, \ldots, x_6)$ where the first three bits are encoded using Chuang's $\POM$ and similarly for the last three bits. It is a straightforward exercise to verify that this is odd-parity-oblivious and that any bit can be decoded with bias $1/\sqrt{3} > 1/\sqrt{6}$. We leave finding the optimal bounds for odd-parity-oblivious $\RAC$s an open problem. 

\subsubsection{Parity-oblivious classical $\RAC$s}
\label{ssect:classical} 

We also study classical $\RAC$s, defined below, for which both variants of bias and both variants of parity-obliviousness are defined analogously.  

\begin{definition}[Classical $\RAC$s with worst-case and average-case biases] 
A \emph{classical $\RAC$} 
is a set of strings 
$\{e(x,r): x \in \zo^n, r \in \zo^m \}$ where $r$ corresponds to private randomness. After choosing ${x \in \zo^n}$ uniformly at random,  Alice samples $r$ from the private randomness, sends to Bob the string $e(x,r)$, and Bob has a decoding procedure given as function $f_t$, for each $t \in [n]$, for learning the $t$'th bit of $x$.  
\end{definition}

We note that the equivalence stated in Theorem~\ref{thm:Equiv} holds in the classical case as well (remarked in Section~\ref{sec:Equiv}). To find the optimal average-case bias of even-parity-oblivious classical $\RAC$s, we provide the following theorem. 
 
\begin{theorem}[Optimal classical $\Ind$ game bias] \label{thm:CGameBias}
For any $n \in \mathbb{N}$, the optimal classical bias of an $\Indn$ strategy is $\sqrt{\frac{2}{\pi n}}(1 + O(1/n))$.
\end{theorem}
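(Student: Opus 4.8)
The plan is to reduce to deterministic strategies, recast the optimal bias as a Fourier-analytic extremal problem over Boolean functions, show that the majority function is optimal, and evaluate the resulting expectation via Stirling's formula.

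First I would note that the winning probability of a classical strategy with shared randomness is a convex combination of winning probabilities of deterministic strategies, so an optimal strategy may be taken deterministic: Alice applies a function $a\colon \zo^n \to \zo$ to her input $s$, and Bob, whose only input is $t$, outputs a fixed bit $b_t$. Passing to $\pm 1$-valued quantities, with $x_i := (-1)^{s_i}$, $A(x) := (-1)^{a(s)}$, and $\beta_t := (-1)^{b_t}$, the winning probability becomes
\[
\frac1n \sum_{t=1}^n \Pr_s\!\left[a(s)\oplus b_t = s_t\right] \;=\; \frac1n \sum_{t=1}^n \frac12\Big(1 + \beta_t\, \widehat{A}(t)\Big), \qquad \widehat{A}(t) := \E_{x \sim \mu(\{\pm1\}^n)}\!\left[A(x)\, x_t\right].
\]
Choosing $\beta_t = \mathrm{sign}(\widehat A(t))$ is optimal, so the optimal classical bias of $\Indn$ equals $\frac1n \max_A \sum_{t=1}^n |\widehat A(t)|$, the maximum being over all $A\colon \{\pm1\}^n \to \{\pm1\}$.

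Next, given any $A$, replacing $A(x)$ by $A(\eps_1 x_1, \dots, \eps_n x_n)$ with $\eps_t := \mathrm{sign}(\widehat A(t))$ is a relabelling of inputs that turns each $\widehat A(t)$ into $|\widehat A(t)|$ while leaving $\sum_t |\widehat A(t)|$ unchanged; hence
\[
\sum_{t=1}^n |\widehat A(t)| \;\le\; \max_{A\colon \{\pm1\}^n \to \{\pm1\}} \E_x\!\Big[A(x) \sum_{t=1}^n x_t\Big] \;=\; \E_{x \sim \mu(\{\pm1\}^n)}\big[\,|x_1 + \dots + x_n|\,\big],
\]
where the last equality holds because the pointwise-optimal choice is $A(x) = \mathrm{sign}(x_1 + \dots + x_n)$, i.e.\ the majority function (ties occur only when the sum vanishes and contribute nothing). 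Thus the optimal classical bias of $\Indn$ is \emph{exactly} $\frac1n\, \E_x\big[\,|x_1 + \dots + x_n|\,\big]$, attained by majority. This extremal step is the only conceptual ingredient.

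Finally I would evaluate the expectation. A standard simple-random-walk identity, proved by splitting on the parity of $n$, gives $\E_x[\,|x_1 + \dots + x_n|\,] = \tfrac{n}{2^{n-1}}\binom{n-1}{\lfloor (n-1)/2\rfloor}$, and a quantitative form of Stirling's formula gives $\binom{n-1}{\lfloor (n-1)/2\rfloor} = \tfrac{2^{n-1}}{\sqrt{\pi n/2}}\big(1 + O(1/n)\big)$, so the bias is $\sqrt{\tfrac{2}{\pi n}}\big(1 + O(1/n)\big)$, as claimed. The only thing requiring care is a uniform bound on the $O(1/n)$ error term, for which one invokes the known two-sided bounds on $\binom{2m}{m}4^{-m}\sqrt{\pi m}$ and treats even and odd $n$ separately; the rest is routine bookkeeping.
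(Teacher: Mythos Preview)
Your proof is correct and arrives at exactly the same quantity as the paper: the optimal classical bias equals $\frac{1}{n}\,\E_x\big[\,|x_1+\cdots+x_n|\,\big]$, the mean absolute deviation of a symmetric random walk, which is then evaluated asymptotically. The only real difference is the order in which you optimize. You fix Alice's function $A$ first, optimize Bob to get bias $\frac{1}{n}\sum_t |\widehat A(t)|$, and then argue via sign-flipping and pointwise maximization that the majority function is the extremal $A$. The paper instead fixes Bob's string $b$ first and optimizes Alice pointwise: for each input $s$ she outputs whichever bit makes the Hamming distance extreme, giving bias $\frac{2}{n}\,\E_s\big[\,\big|\tfrac{n}{2}-|b\oplus s|_H\big|\,\big]$ in one line, and then simply observes this is independent of $b$ because $b\oplus s$ is uniform. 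So the paper sidesteps the ``majority is optimal'' argument entirely. Your Fourier route carries a bit more machinery but makes the link to level-$1$ Fourier weight explicit; the paper's route is shorter and just cites the known asymptotic for the binomial mean deviation rather than passing through the closed-form central-binomial identity you use.
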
 

This theorem, together with the classical version of the equivalence shows that classical $\RAC$s that are even-parity-oblivious have an optimal average-case bias of $\sqrt{\frac{2}{\pi n}}(1 + O(1/n))$. Note that, asymptotically, this value is the same as the quantum value, that is, having a bias of $O(1/\sqrt{n})$. However, differences arise when one considers $\RAC$s that also hide the odd parities. Consider the following proposition of Spekkens, Buzacott, Keehn, Toner, and Pryde. 

\begin{proposition}[Optimal parity-oblivious classical $\RAC$s \textup{\cite{SBKTP09}}] 
For any $n \in \mathbb{N}$, a parity-oblivious classical $\RAC$ of $n$ 
bits has \emph{average-case bias} at most $1/n$. Moreover, this bound can be achieved using $1$ classical bit. 
\end{proposition}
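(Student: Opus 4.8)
The plan is to pass from the randomized encoding to the classical channel it induces and then run a short Fourier argument over $\zo^n$. Given a parity-oblivious classical $\RAC$, set $p(m \mid x) := \Pr_r[\, e(x,r) = m \,]$, the probability that Alice sends message $m$ on input $x$. Since $r$ is independent of $x$ and Bob sees only the message, without loss of generality his decoder for index $t$ is a deterministic function $f_t$ of $m$, and (with $x$ uniform) the success probability for bit $t$ is $\sum_m \Pr[\, M = m,\ x_t = f_t(m)\,]$, where $M$ denotes the message. Writing $\chi_S(x) = (-1)^{x_S}$, I would expand each $p(\cdot \mid x)$ in the parity basis, $p(m \mid x) = \sum_{S \subseteq [n]} \hat p_m(S)\,\chi_S(x)$, and observe that the parity-obliviousness hypothesis --- that for $|S| \ge 2$ the message distribution is unchanged whether one averages over $x$ with $x_S = 0$ or over $x$ with $x_S = 1$ --- says exactly $\sum_x \chi_S(x)\, p(m \mid x) = 0$, i.e.\ $\hat p_m(S) = 0$, for every $m$ and every $S$ with $2 \le |S| \le n$. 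Hence $p(m \mid x) = q_m + \sum_{i=1}^n a_{m,i}(-1)^{x_i}$ with $q_m = \Pr[M = m]$ (so $\sum_m q_m = 1$) and $a_{m,i} = \hat p_m(\{i\})$.

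Next I would evaluate the optimal decoding bias from this normal form. Summing over the $2^{n-1}$ strings with $x_t = b$ annihilates every Fourier term except the constant and the $\{t\}$-term, giving $\Pr[\, M = m,\ x_t = b\,] = \tfrac12\bigl(q_m + (-1)^b a_{m,t}\bigr)$; Bob's best choice of $f_t(m)$ then yields success probability $\tfrac12 \sum_m\bigl(q_m + |a_{m,t}|\bigr) = \tfrac12\bigl(1 + \sum_m |a_{m,t}|\bigr)$, so the bias for bit $t$ is exactly $\alpha_t = \sum_m |a_{m,t}|$ and the average-case bias is $\tfrac1n \sum_m \sum_t |a_{m,t}|$. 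The only remaining ingredient is nonnegativity: $p(m \mid x) \ge 0$ for all $x$, and choosing $x$ so that each $a_{m,i}(-1)^{x_i}$ is negative forces $q_m \ge \sum_i |a_{m,i}|$. Summing over $m$ and using $\sum_m q_m = 1$ gives $\sum_m \sum_i |a_{m,i}| \le 1$, hence average-case bias $\le 1/n$ --- and this holds for an arbitrary message alphabet, not just a single bit.

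For tightness I would exhibit the one-bit construction: Alice draws $i \in [n]$ uniformly from her private randomness and sends the single bit $m = x_i$; to guess $x_t$, Bob outputs $b = m$. Then $\Pr_{x,i}[x_i = x_t] = \tfrac1n + \tfrac{n-1}{n}\cdot\tfrac12 = \tfrac12 + \tfrac1{2n}$, so every bit, and hence the average, is decoded with bias $1/n$. A one-line check shows $p(m \mid x) = \tfrac12 + \tfrac{(-1)^m}{2n}\sum_i(-1)^{x_i}$ has only empty-set and singleton Fourier weight, so this $\RAC$ is parity-oblivious; indeed it is precisely the equality case of the bound, with $q_m = \sum_i|a_{m,i}| = \tfrac12$.

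I expect the main obstacle to be the setup of the first paragraph rather than any later estimate: namely, arguing carefully that no generality is lost in (a) replacing the randomized encoding $e(x,r)$, over an arbitrary message alphabet, by the channel $p(\cdot \mid x)$ together with a deterministic decoder, and (b) recasting parity-obliviousness as the vanishing of all Fourier coefficients $\hat p_m(S)$ with $|S| \ge 2$. Once that translation is in place, the remaining content is the short Fourier-plus-nonnegativity computation above, and the optimal construction is immediate.
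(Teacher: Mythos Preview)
Your argument is correct, but there is nothing in the present paper to compare it against: this proposition is stated as a result of Spekkens, Buzacott, Keehn, Toner, and Pryde and is simply cited, with no proof given here. So any comparison would have to be with \cite{SBKTP09} rather than with this paper.

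On the substance, your Fourier route is clean and complete. Parity-obliviousness is exactly the statement that $\sum_x (-1)^{x_S} p(m\mid x)=0$ for every $m$ and every $|S|\ge 2$, so the reduction to $p(m\mid x)=q_m+\sum_i a_{m,i}(-1)^{x_i}$ is immediate; the decoder-bias computation $\alpha_t=\sum_m|a_{m,t}|$ and the nonnegativity bound $q_m\ge\sum_i|a_{m,i}|$ (by choosing, for each fixed $m$, the $x$ that makes every singleton term negative) then give $\sum_{m,t}|a_{m,t}|\le\sum_m q_m=1$ and hence average bias $\le 1/n$. Your one-bit construction is the standard one and your verification that it has only empty-set and singleton Fourier weight is correct.

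Regarding the obstacle you flag: in this paper's definition of a classical $\RAC$, Alice's randomness $r$ is private and Bob's decoder $f_t$ is already a function of the message, so step (a) is not a reduction at all but the given setup; step (b) is the one-line equivalence above. Neither is a genuine difficulty.
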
 
 
Thus, there is a difference between the optimal average-case biases of parity-oblivious and even-parity-oblivious $\RAC$s in the classical setting, in contrast to the quantum setting. 

\subsubsection{Large non-contextuality inequality violations} \label{ssect:NC}

The basic primitives in an operational theory are preparations and measurements which can be thought of as instructions for the laboratory apparatus. For example, the operational theory can be given in terms of {\em hidden variables} which are probability distributions characterizing the outcomes of the preparations and measurements. That is, a preparation creates a {\em physical state} (each occurring with some probability) and a measurement acts upon a physical state and outputs a {\em prediction} or simply an {\em outcome} (each occurring with some probability). Thus, the probability distributions characterizing these actions are how they are represented in this model.
  
A hidden variable model is {\em preparation 
non-contextual} if whenever two preparations yield the same statistics for all possible measurements then they are represented equivalently in the model and a hidden variable model is {\em measurement 
non-contextual} if whenever two measurements have the same statistics for all preparations then they are represented equivalently in the model (see  \cite{SBKTP09} and references therein for a more thorough discussion). Similar to non-locality, a non-contextuality inequality is any inequality on probability distributions that follows from the assumption that there exists a hidden variable model that is preparation or measurement non-contextual.

Spekkens, Buzacott, Keehn, Toner, and Pryde \cite{SBKTP09} proved the following \emph{non-contextuality inequality} (or NC inequality, for short). 

\begin{proposition}[Non-contextuality inequality \textup{\cite{SBKTP09}}] \label{clPOMn}
In any operational theory that admits a preparation non-contextual hidden variable model, the \emph{average-case bias} for any parity-oblivious $\RAC$ is at most $1/n$. 
\end{proposition}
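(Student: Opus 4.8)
The plan is to show that in any preparation-noncontextual hidden variable model, a parity-oblivious $\RAC$ of $n$ bits can be "simulated" by a classical parity-oblivious $\RAC$, so that the bound $1/n$ from the classical proposition of Spekkens et al.\ carries over. First I would recall the setup: Alice's encoding of $x$ is an operational preparation procedure $P_x$, and Bob's decoding of bit $t$ is an operational measurement $M_t$ with outcomes $\{0,1\}$. A preparation-noncontextual hidden variable model assigns to each preparation $P_x$ a probability distribution $\mu_x(\lambda)$ over a hidden variable space $\Lambda$, and to each measurement $M_t$ a response function $\xi_t(b \mid \lambda)$, such that the operational statistics factor as $\Pr[b \mid x, t] = \sum_\lambda \mu_x(\lambda)\, \xi_t(b \mid \lambda)$.

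The key step is to translate the parity-obliviousness constraint into the hidden-variable picture. Operationally, parity-obliviousness says that for every $S \subseteq [n]$ with $|S| \geq 2$, the two mixed preparations $\frac{1}{2^{n-1}}\sum_{x : x_S = 0} P_x$ and $\frac{1}{2^{n-1}}\sum_{x : x_S = 1} P_x$ yield identical statistics for all measurements. By preparation noncontextuality, these two mixtures must be represented by the same distribution over $\Lambda$, i.e.\ $\frac{1}{2^{n-1}}\sum_{x : x_S = 0}\mu_x = \frac{1}{2^{n-1}}\sum_{x : x_S = 1}\mu_x$ for all such $S$. This is exactly the statement that the hidden variable, when $x$ is drawn uniformly, carries no information about any parity $x_S$ with $|S|\geq 2$ — so the map $x \mapsto \lambda$ (sampling $\lambda \sim \mu_x$) together with the decoding functions $\xi_t$ constitutes a classical parity-oblivious $\RAC$ (the hidden variable $\lambda$ playing the role of the classical message, with private randomness absorbed into the sampling). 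I would then invoke the classical proposition: such a classical $\RAC$ has average-case bias at most $1/n$. Since the operational decoding probabilities equal the classical ones by the factorization, the average-case bias of the original $\POM$ is at most $1/n$.

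The main obstacle — really the only subtle point — is arguing carefully that the noncontextuality assumption applies to the relevant coarse-grained (mixed) preparations and that the equivalence of representations really forces the measure identity above; one must be precise that "yields the same statistics for all measurements" includes all measurements in the operational theory, not just Bob's decoding measurements, and that the hidden-variable model is required to reproduce all of these. One should also note a minor bookkeeping point: a classical $\RAC$ as defined allows private randomness $r$, and here the randomness in $\lambda \sim \mu_x$ plays that role, so the correspondence is faithful. Everything else is a direct transcription of the operational constraints into the hidden-variable formalism followed by the cited classical bound.
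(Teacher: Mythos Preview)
The paper does not give its own proof of this proposition; it is simply quoted from \cite{SBKTP09}. Your sketch is essentially the argument Spekkens et al.\ use there: preparation noncontextuality forces the two parity mixtures to be represented by the same distribution over the ontic space, so sampling $\lambda\sim\mu_x$ and decoding with the response functions $\xi_t$ is a classical parity-oblivious $\RAC$, to which the classical $1/n$ bound applies. So your proposal is correct and matches the intended argument from the cited reference.

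One small point worth tightening if you write this out fully: the paper's definition of a classical $\RAC$ has the encoding take values in finite bit-strings, whereas the ontic space $\Lambda$ may a priori be arbitrary (even uncountable), and the response functions $\xi_t(\,\cdot\mid\lambda)$ may be stochastic rather than deterministic. Neither is a real obstacle---the proof of the classical $1/n$ bound in \cite{SBKTP09} is information-theoretic and goes through for any message alphabet and randomized decoding---but you should say so explicitly rather than silently identify ``hidden variable $\lambda$'' with ``classical string $e(x,r)$.''
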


Then, they discussed that quantum mechanics violates this NC inequality for $n \in \{ 2, 3 \}$, by noting the previously mentioned parity-oblivious quantum $\RAC$s of two and three classical bits into one qubit with respective average-case biases of $\frac{1}{\sqrt{2}}$ and $\frac{1}{\sqrt{3}}$. It was left as an open question whether quantum mechanics violates this NC inequality for $n \geq 4$. 

Through our analysis, we have shown that the optimal average-case bias for quantum parity-oblivious $\RAC$s is $1/\sqrt{n}$, thus resolving their main open question. This provides a family of NC inequality violations that grow with the input size $n$. 

Note, that if there exists a game for which the winning probability of any classical strategy cannot deviate from $1/2$ by more than $\delta_1$ and, moreover, there is
a quantum strategy with winning probability at least $1/2+\delta_2$, then we can obtain a violation of order $\delta_2/\delta_1$ (see \cite{BRSdW12} for details).  
Hence, to quantify the violation of this NC inequality, we consider the ratio of the optimal average-case bias of quantum parity-oblivious $\RAC$s and that of any operational theory  admitting a preparation non-contextual hidden variable model. More precisely, we show an explicit non-contextuality inequality  violation of order $\sqrt{n}$.

\begin{theorem} \label{context}
For any {integer $n \geq 2$}, there exists an explicit non-contextuality inequality that provides a violation of order $\sqrt{n}$. 
\end{theorem}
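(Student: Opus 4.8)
The plan is to assemble Theorem~\ref{context} directly from the results already collected in the excerpt, treating it as essentially a restatement and quantification of what has been proved. First I would fix an integer $n \geq 2$ and recall the two numbers that bracket the average-case bias of parity-oblivious $\RAC$s: on the one hand, Proposition~\ref{clPOMn} (the Spekkens--Buzacott--Keehn--Toner--Pryde non-contextuality inequality) says that in \emph{any} operational theory admitting a preparation non-contextual hidden variable model, the average-case bias of any parity-oblivious $\RAC$ is at most $1/n$; on the other hand, Theorem~\ref{main} (together with its explicit $\lfloor n/2\rfloor$-qubit construction) exhibits a quantum parity-oblivious $\RAC$ of $n$ bits achieving worst-case bias $1/\sqrt{n}$, hence average-case bias at least $1/\sqrt{n}$. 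So the very inequality in Proposition~\ref{clPOMn} is the explicit non-contextuality inequality we want, and quantum mechanics violates it.

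Next I would make the notion of ``violation of order $\sqrt{n}$'' precise using the framework mentioned just before the theorem statement (following \cite{BRSdW12}): if every preparation non-contextual model keeps the relevant winning probability within $\delta_1$ of $1/2$, while some quantum strategy achieves at least $1/2 + \delta_2$, the violation is of order $\delta_2/\delta_1$. Here the ``winning probability'' is $\frac{1}{2}(1+\alpha)$ where $\alpha$ is the average-case bias of the parity-oblivious $\RAC$, so $\delta_1 = \frac{1}{2}\cdot\frac{1}{n}$ by Proposition~\ref{clPOMn} and $\delta_2 = \frac{1}{2}\cdot\frac{1}{\sqrt{n}}$ by the quantum construction in Theorem~\ref{main}. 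Taking the ratio gives
\[
\frac{\delta_2}{\delta_1} = \frac{1/(2\sqrt{n})}{1/(2n)} = \sqrt{n},
\]
which is the claimed order of violation. I would spell out that the inequality is ``explicit'' in the sense that both the inequality (the SBKTP bound) and the quantum strategy saturating the violation (the hyperbit-based $\lfloor n/2\rfloor$-qubit $\POM$ from the proof of Theorem~\ref{main}) are given concretely.

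Two small points would need a line of care. First, Proposition~\ref{clPOMn} and Theorem~\ref{main} are both phrased for average-case bias, so no conversion between worst-case and average-case bias is needed on the upper (quantum) side once we observe a worst-case bias of $1/\sqrt n$ forces average-case bias $\geq 1/\sqrt n$; I would just note this monotonicity explicitly. Second, I should confirm that $n \geq 2$ is exactly the regime in which $1/\sqrt{n} > 1/n$ strictly, so that there genuinely \emph{is} a violation (for $n=1$ both equal $1$ and the statement is vacuous); this matches the hypothesis in the theorem.

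I do not expect a serious obstacle: essentially all the work is done by Theorem~\ref{main} and Proposition~\ref{clPOMn}, and the only content of Theorem~\ref{context} is packaging these into the $\delta_2/\delta_1$ language and verifying the arithmetic $\sqrt{n} = (1/\sqrt n)/(1/n)$. The mildest subtlety — and the thing I would be most careful to state correctly — is the precise definition of the violation ratio borrowed from \cite{BRSdW12}, making sure the ``$1/2$'' baseline and the identification of $\delta_1,\delta_2$ with the respective biases (divided by two) are consistent with how that reference measures non-contextuality inequality violations; once that convention is pinned down, the proof is a two-line computation.
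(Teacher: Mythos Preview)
Your proposal is correct and mirrors the paper's own argument: the paper does not give a separate formal proof of Theorem~\ref{context} but derives it exactly as you do, by combining Proposition~\ref{clPOMn} (the $1/n$ non-contextuality bound) with Theorem~\ref{main} (the quantum $1/\sqrt{n}$ construction) and reading off the ratio $\delta_2/\delta_1 = \sqrt{n}$ in the \cite{BRSdW12} sense. Your additional care about worst-case versus average-case bias and the $n\geq 2$ hypothesis is appropriate and consistent with the paper.
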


Note that other large non-contextuality inequality violations have been found, see for example the work of Vidick and Wehner~\cite{VW11}. 

\subsubsection{Device-independent quantum $\RAC$s} \label{ssect:DI2}

Until this point, we have discussed the bias and the parity-obliviousness of a quantum $\RAC$ which are functions of the encoding states $\{ \rho_x : x \in \zo^n \}$ only. However, much of the cryptographic analysis in this paper is concerned with how the states $\rho_x$ are {\em prepared}. In this subsection, we discuss how the security of the $\RAC$ is affected if one cannot trust the quantum apparatus used in the preparation of $\rho_x$. 

The device-independent model of cryptographic security deals with the setting when the devices used in the protocol are not trusted, or are even malicious, being created by the cheaters/eavesdroppers themselves. Many security proofs in this setting are based on quantum non-locality or the no-signalling principle, each having {their} own {limitations} which ultimately limits the cheating capabilities for anyone controlling the preparation and/or execution of the quantum devices in the protocol. {Recall that the no-signalling principle, which is satisfied by the laws of quantum mechanics, roughly states that it is impossible to {\em send information} arbitrarily fast, in particular faster than the speed of light. For example, in a quantum setting, Alice cannot convey information to a distant Bob by simply measuring her half of a shared quantum state.} 
 
Obviously, if the preparation of the encoding $\rho_x$ is as simple as Alice having a quantum device which outputs $\rho_x$ on input $x$, then certainly device-independence is not feasible since Bob may control the quantum device and just have it prepare $\rho_x = \kb{x}$ (or some other function of $x$ according to {what} he wishes to learn). However, the preparation need not be so simple. We now sketch the preparation of the quantum $\RAC$s presented in this work to give an idea of how they can be device-independent. 
 
First, Alice creates a bipartite quantum state $\ket{\psi}$ and sends a subsystem to Bob. {\em Afterwards} she chooses a string $s \in \zo^n$ uniformly at random and measures her half of the state to get an outcome $a \in \zo$. She then defines $x_t := s_t \oplus a$, for all $t \in [n]$, and Bob's post-measured state is now his encoding of the string $x$. 
Since there is no communication from Alice to Bob after Alice chooses $s$, he must not be able to infer any information about $s$ from his encoding of $x$. Thus, Bob has limited information of any function of $x$ which contains information about $s$. For example, Bob cannot learn $x_1 \oplus x_2$ since 
\[ x_1 \oplus x_2 = (s_1 \oplus a) \oplus (s_2 \oplus a) = s_1 \oplus s_2 \] 
which is hidden {by} the no-signalling principle. Therefore, even if Bob created the entire state which Alice shares at the beginning, and Alice's measurement, he cannot infer any information about $x_1 \oplus x_2$, promised only by the no-signalling principle. 

\begin{theorem} \label{thm:DI}
There exists a preparation of an optimal $\POMn$ with bias $1/\sqrt{n}$ which is even-parity-oblivious in the device-independent model against a no-signalling Bob. 
\end{theorem}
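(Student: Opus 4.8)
The plan is to start from the explicit optimal $\POMn$ construction already promised (the one using $\lfloor n/2 \rfloor$ qubits plus one classical bit, based on hyperbits and Tsirelson's theorem), and re-examine it as a two-party protocol in which Alice and Bob share a bipartite state and Alice's label $x$ is defined only \emph{after} Alice performs a local measurement. Concretely, I would first recall the equivalence of Theorem~\ref{thm:Equiv}: an optimal $\Indn$ strategy with bias $1/\sqrt n$ gives, and comes from, an even-parity-oblivious $\RAC$ with average-case bias $1/\sqrt n$. In the $\Ind$-game picture Alice holds input $s \in \zo^n$, measures her half of the shared state to produce a bit $a$, and Bob, on input $t$, measures his half to produce $b$ with $\Pr[a\oplus b = s_t] = \half(1 + 1/\sqrt n)$ averaged over $s,t$. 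The key reinterpretation is: let Alice send Bob's subsystem \emph{first}, then sample $s$, then measure to get $a$, then \emph{define} $x_t := s_t \oplus a$ for all $t$. Bob's post-measurement (conditioned on nothing, since he has sent nothing and received nothing about $s$) reduced state is exactly his encoding $\rho_x$ of the string $x$, and his guessing strategy for $x_t$ is: measure with the $\Ind$-game operator for index $t$ to get $b$, output $b$ as the guess — wait, one must be careful: Bob's guess for $x_t = s_t \oplus a$ should be $b \oplus (\text{something})$; since $a\oplus b = s_t$ iff $b = s_t \oplus a = x_t$, Bob simply outputs $b$, and this succeeds with probability $\half(1 + 1/\sqrt n)$ on average, i.e.\ average-case bias $1/\sqrt n$. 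To also get \emph{worst-case} bias $1/\sqrt n$ (so that it is a genuine optimal $\POMn$, not merely even-parity-oblivious), I would invoke the explicit construction's symmetry — the hyperbit/Tsirelson construction can be taken so that every bit is decoded with the same bias — so average equals worst case here.

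Next I would verify the two required properties. \emph{Correctness/bias}: already done above — the bias is $1/\sqrt n$, matching the optimum of Theorem~\ref{main}. \emph{Even-parity-obliviousness against a no-signalling Bob}: take any even-size $S \subseteq [n]$ with $|S| \ge 2$. Then $x_S = \bigoplus_{i\in S} x_i = \bigoplus_{i\in S}(s_i \oplus a) = \bigoplus_{i\in S} s_i = s_S$, because $|S|$ is even so the $|S|$ copies of $a$ cancel. Thus $x_S$ equals a fixed parity $s_S$ of Alice's freshly-sampled, uniformly random string $s$, and this sampling happens \emph{after} the only message (Bob's subsystem) has been sent. By the no-signalling principle, nothing in Bob's system — including anything he prepared himself, or any subsequent operations — can carry information about $s_S$: formally, Bob's reduced state is independent of $s$, hence of the uniformly distributed $s_S$, hence $\Pr[\text{Bob guesses } x_S \text{ correctly}] = \half$ for every strategy. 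This is precisely the device-independent even-parity-obliviousness claim, and it holds even if Bob fabricated the initial shared state and Alice's measurement device, since the argument uses only the timing of the protocol (no message after $s$ is chosen) and no-signalling.

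The remaining point is removing the one classical bit from the original $\POMn$ construction, which is where the device-independence genuinely bites: in the trusted construction the classical bit is, in effect, a value that Alice knows and could in principle have been chosen to encode $x$ directly, which would destroy security. The fix is exactly the reinterpretation above: the "classical bit" becomes Alice's measurement outcome $a$, which is produced only at the moment Alice measures, after the quantum message is sent; so it is not something a malicious Bob-controlled device could have leaked in advance. I would spell out that in this reformulation there is no separate classical message at all — $x$ is never transmitted, only Bob's quantum subsystem is — so the only thing Bob ever receives is uncorrelated (by no-signalling) with the later-chosen $s$, and therefore with every even parity $x_S$. I expect the main obstacle to be purely expository: carefully stating the order of operations (send subsystem $\to$ sample $s$ $\to$ measure to get $a$ $\to$ set $x := s \oplus (a,\dots,a)$) and confirming that the explicit optimal construction from the earlier section really can be cast in this "measure-after-sending" form with all bits decoded at bias $1/\sqrt n$; once the timing is pinned down, the no-signalling argument for even parities is a one-line cancellation ($|S|$ even $\Rightarrow$ the $a$'s cancel $\Rightarrow$ $x_S = s_S$), and correctness is immediate from Theorems~\ref{thm:Equiv}, \ref{thm:QGameBias}, and~\ref{main}.
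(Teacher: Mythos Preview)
Your approach is essentially the paper's: recast the optimal construction so that Alice first sends Bob's subsystem, then samples $s$ uniformly, measures to obtain $a$, and sets $x := s \oplus \boldsymbol{a}$; for even $|S|$ the copies of $a$ cancel, giving $x_S = s_S$, which is hidden from Bob by no-signalling because $s$ is chosen after the only message. The honest protocol is exactly the paper's $\{\sigma_{x'}\}$ $\RAC$, so parity-obliviousness and worst-case bias $1/\sqrt{n}$ are inherited.

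The one subtlety you overlook, which the paper flags explicitly, is uniformity of $x$ in the device-independent model. If Bob fabricates Alice's measurement device, it may output $a$ as a function of $s$ (say $a := s_1$), and then $x = s \oplus \boldsymbol{a}$ is not uniform (here $x_1 = 0$ always). The paper notes this ``does not affect the security'' --- your even-parity argument survives intact --- but it does mean the preparation no longer produces a uniformly random encoded string, which is part of the $\RAC$ specification. The paper patches this by having Alice choose an independent random bit $d$ (trusted coin, not device-generated), send $d$ together with the quantum subsystem, and replace $a$ by $a \oplus d$; this forces $x$ to be uniform regardless of what the device does with $s$, while Bob can still undo $d$ in his decoding. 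Your write-up should include this fix or at least acknowledge the issue.
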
  

We prove the above theorem using a small modification of our  optimal $\POMn$ in Section~\ref{sec:OptEnc}. 
See Subsection~\ref{ssect:DI} for more details. {Note also that the above theorem implies that there exist optimal even-parity-oblivious random access codes which retain their cryptographic property in the device independent model.}

\subsection{Organization of the paper} \label{ssect:organ}

In Section~\ref{sec:Equiv}, we prove the equivalence of even-parity-oblivious $\RAC$s and $\Indn$ strategies. In Section~\ref{sec:GameBias} we discuss the optimal quantum and classical bias of the $\Indn$ game for any $n$. We conclude in Section~\ref{sec:OptEnc} by presenting an optimal parity-oblivious quantum $\RAC$ and {prove the security for the even-parity-obliviousness} in the device-independent model. 

\section{Equivalence of $\EPOM$s and $\Indn$ strategies} \label{sec:Equiv} 
  
In this section we prove the equivalence in Theorem~\ref{thm:Equiv}, reproduced below. 

\setcounter{theorem}{1}
\begin{theorem}[Equivalence]  
For any $n \in \mathbb{N}$, there exists a quantum even-parity-oblivious $\RAC$ of $n$ uniformly random classical bits with \emph{average-case} bias $\alpha$ if and only if there exists a quantum $\Indn$ strategy with bias $\alpha$. 
\end{theorem}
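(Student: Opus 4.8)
The plan is to prove each implication by an explicit dictionary between the two objects, built around a clean reformulation of even-parity-obliviousness: a quantum $\RAC$ $\{\rho_x : x \in \zo^n\}$ is even-parity-oblivious if and only if $\rho_x + \rho_{\bar x} = 2\bar\rho$ for every $x$, where $\bar\rho := \frac{1}{2^n}\sum_x \rho_x$ and $\bar x$ is the bitwise complement of $x$. Indeed, writing $\rho_x = \sum_{S \subseteq [n]} (-1)^{x_S}\hat\rho_S$ with $\hat\rho_S := \frac{1}{2^n}\sum_x (-1)^{x_S}\rho_x$, complementation sends $(-1)^{x_S}$ to $(-1)^{|S|}(-1)^{x_S}$, so the vanishing of $\hat\rho_S$ for every even $S \neq \emptyset$ is equivalent to $\rho_x + \rho_{\bar x}$ being independent of $x$. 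This is precisely the feature that matches the $\Indn$ game, whose winning condition $a \oplus b = s_t$ is insensitive to simultaneously flipping Alice's output $a$ and complementing her input $s$; throughout I will use the correspondence $x = s \oplus (a,\dots,a)$, so that $a = 0$ gives $x = s$ and $a = 1$ gives $x = \bar s$, and in either case $x_t = s_t \oplus a$.

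For the direction from a game strategy to an $\EPOMn$: given a strategy with shared pure state $\ket{\psi}$, Alice's POVMs $\{A^s_0, A^s_1\}$ and Bob's POVMs $\{M^t_0, M^t_1\}$, I would first note that we may assume Alice's marginal outcome distribution is uniform for every $s$, since XOR-ing both players' outputs with a shared private bit preserves the winning condition and forces this. Then I define the $\RAC$ by the procedure sketched in the introduction — Alice picks $s$ uniformly, measures her half of $\ket{\psi}$, obtains $a$, sets $x := s \oplus (a,\dots,a)$, and the encoding is Bob's residual state — and a short computation gives that $x$ is uniform and $\rho_x = \Tr_A[(A^x_0 \otimes I)\kb{\psi}] + \Tr_A[(A^{\bar x}_1 \otimes I)\kb{\psi}]$. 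Even-parity-obliviousness follows by computing $\sum_x (-1)^{x_S}\rho_x$ for even $|S| \geq 2$: substituting $y = \bar x$ in the second summand and using $(-1)^{\bar y_S} = (-1)^{y_S}$ merges the two halves into $\sum_x (-1)^{x_S}\Tr_A[((A^x_0 + A^x_1)\otimes I)\kb{\psi}] = \big(\sum_x (-1)^{x_S}\big)\,\Tr_A \kb{\psi} = 0$ by POVM completeness and $\sum_x (-1)^{x_S} = 0$. Finally, since $x_t = s_t \oplus a$, Bob's game output $b$ is literally a guess for $x_t$, so averaging over $t$ identifies the average decoding success probability with the average winning probability, and the average-case bias equals the game bias $\alpha$.

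For the converse, given an $\EPOMn$ $\{\rho_x\}$ with decoders $\{M^t_0, M^t_1\}$ and average-case bias $\alpha$, I take the shared state $\ket{\psi}$ to be a purification of $\bar\rho$, with Alice holding the purifying system. The reformulation gives $\rho_s, \rho_{\bar s} \preceq \rho_s + \rho_{\bar s} = 2\bar\rho$, so both are supported on $\supp(\bar\rho)$ and $\{(\tfrac12, \rho_s), (\tfrac12, \rho_{\bar s})\}$ is an ensemble decomposition of $\bar\rho$; by the Hughston--Jozsa--Wootters (steering) theorem there is a two-outcome POVM $\{A^s_0, A^s_1\}$ on Alice's system whose outcome $a$ collapses Bob's system to $\rho_s$ (if $a=0$) or $\rho_{\bar s}$ (if $a=1$), each with probability $\tfrac12$ — concretely, in the Schmidt basis of $\ket{\psi}$ one may take $A^s_a = \tfrac12\,\bar\rho^{-1/2}\,\rho_{s\oplus(a,\dots,a)}^{\mathsf{T}}\,\bar\rho^{-1/2}$, which sums to the identity on $\supp(\bar\rho)$ exactly because $\rho_s + \rho_{\bar s} = 2\bar\rho$. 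Alice uses this as her strategy and Bob decodes his $t$-th bit with $\{M^t_0, M^t_1\}$, outputting the result; since Bob then holds $\rho_{s\oplus(a,\dots,a)}$ whose $t$-th bit is $s_t \oplus a$, the probability that $a \oplus b = s_t$, averaged over all inputs, is $\tfrac12(1+\alpha)$.

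The main obstacle is the steering step in this converse: one must verify that $\{(\tfrac12,\rho_s),(\tfrac12,\rho_{\bar s})\}$ genuinely sums to $\bar\rho$ (which is exactly the even-parity-oblivious reformulation, so no extra hypothesis is needed) and that the HJW measurement is a legitimate POVM, its completeness relation on $\supp(\bar\rho)$ being again the identity $\rho_s + \rho_{\bar s} = 2\bar\rho$; the kernel of $\bar\rho$ annihilates every $\rho_x$ and can simply be discarded. Everything else is routine bookkeeping once the correspondence $x \leftrightarrow (s,a)$ is fixed, and dimension is not an issue since nonlocal games place no bound on the shared state. I would also remark, as the paper promises, that both arguments transfer verbatim to the classical setting — shared entanglement becomes shared randomness, and the steering theorem becomes the trivial fact that any convex decomposition of a distribution over Alice's messages is realized by an appropriate correlated random variable — giving the classical version of the equivalence used later in the paper.
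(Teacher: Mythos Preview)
Your proposal is correct. The direction from $\Indn$ strategies to $\EPOM$s is essentially the paper's argument (same construction, same use of a shared bit to make $a$ uniform, same no-signalling computation), only you phrase the even-parity-oblivious verification in Fourier language rather than by direct enumeration.

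The converse, from $\EPOM$s to $\Indn$ strategies, is where your route genuinely diverges. The paper purifies each $\rho_x$ separately, assembles the superposition $\ket{\Omega_s} = \tfrac{1}{\sqrt 2}(\ket{0}\ket{\psi_s} + \ket{1}\ket{\psi_{\bar s}})$, and then has to argue that Bob's marginal $\sigma_s = \tfrac12\rho_s + \tfrac12\rho_{\bar s}$ is independent of $s$; it does this indirectly, first checking that \emph{every} parity of $s$ is hidden in $\sigma_s$ and then proving a standalone lemma (via a Fourier/contradiction argument) that an encoding hiding all parities must be constant. Only after that does it invoke unitary equivalence of purifications to get Alice's strategy. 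You instead isolate the key structural fact up front --- even-parity-obliviousness is \emph{equivalent} to $\rho_x + \rho_{\bar x} = 2\bar\rho$ --- via a two-line Fourier computation, and then the rest is a single invocation of Hughston--Jozsa--Wootters steering on the ensemble $\{(\tfrac12,\rho_s),(\tfrac12,\rho_{\bar s})\}$ for $\bar\rho$. Your approach is shorter and makes transparent exactly which property of the $\RAC$ is being used; the paper's approach is more self-contained (it does not cite HJW) and yields as a by-product the independently interesting lemma that an encoding hiding every parity conveys no information whatsoever. Both lead to the same shared state (a purification of $\bar\rho$) and the same decoding on Bob's side.
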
 

{For this reason, even-parity-obliviousness of a $\RAC$ is a very natural property. In particular, in the simple reduction from $\Ind$ strategies to $\RAC$s (Subsection~\ref{ssec:GameToProtocol}), we see how even-parity-obliviousness appears and how the $\RAC$ may not hide the odd parities.} 

\subsection{From $\RAC$s to $\Ind$ strategies} 

Let us fix an even-parity-oblivious $\RAC$ $\{ \rho_x : x \in \zo^n \}$ with average-case bias $\alpha$. Let $\spa{B}$ be the Hilbert space used for the encoding. 
Our goal is to construct a strategy for $\Indn$ with bias $\alpha$. For each $\rho_x $, we fix a purification $\ket{\psi_x}$ of $\rho_x$ in the space $\spa{A} \otimes \spa{B}$. For $a\in \{0,1\}$, let $\boldsymbol{a}$ be the $n$-bit string $(a,\dots,a)$ and $\bar{s}$ be the bit-wise complement of a string $s$. {For $s \in \{ 0, 1 \}^n$}, define the following state 
\[
\ket{\Omega_s} := \frac{1}{\sqrt{2}} \sum_{a \in \zo} \ket{a}_\spa{O}\ket{\psi_{s \oplus \boldsymbol{a} }}_\spa{AB} = \frac{1}{\sqrt{2}} \ket{0}\ket{\psi_{s}} + \frac{1}{\sqrt{2}} \ket{1}\ket{\psi_{\bar{s}}},   
\]
where $\spa{O}$ is a qubit register containing the value of $a$. 
We would like to show that if Bob has the register $\spa{B}$ of the above state, then he has no information about $s$. Note that his reduced state is $\sigma_s := \frac{1}{2} \rho_s + \frac{1}{2} \rho_{\bar{s}}$.

The first step is to see that Bob has no information about any parity of $s$ (not even of the values of the singleton bits). Fix an arbitrary, non-empty subset $S$. 
{For fixed $b \in \{ 0, 1 \}$, Bob's reduced state, averaged over all $s \in \{ 0, 1\}^n$ such that $s_S=b$, is given by
\[ \sigma^b_S := \frac{1}{2^{n-1}} \sum_{s: s_S=b} \sigma_s = \frac{1}{2^n}  \left( \sum_{s:s_S=b} \rho_s + \sum_{s:s_{S}=b} \rho_{\bar s} \right). \]
Note that $s_S=\bar{s}_S$ when $|S|$ is even and $s_S=\bar{s}_{S} \oplus 1$ when $|S|$ is odd. Thus, by  defining $\rho^b_S$ in the similar way 
\[ \rho_S^b := \frac{1}{2^{n-1}} \sum_{s: s_S=b} \rho_s \]
we can easily verify that $\sigma_S^b = \rho_S^b$ when $|S|$ is even and $\sigma_S^b = \half \rho_S^0 + \half \rho_S^1$ when $|S|$ is odd. Note that since $\{ \rho_x : x \in \{ 0, 1 \}^n \}$ is an even-parity-oblivious $\RAC$, we have by definition that $\rho_S^0 = \rho_S^1$ for $|S|$ even (otherwise, Bob could measure to learn some information about the even parity). Thus, we have that $\sigma_S^0 = \sigma_S^1$ for all nonempty subsets $S$ and therefore   
all the parities are hidden from Bob when given $\sigma_s$ (when $s$ is chosen uniformly at random).}
This means that for any nonempty subset $S$ and measurement $M$, Bob has a maximum probability of $1/2$ of successfully guessing $s_S$ from the $\RAC$ $\{ \sigma_s : s \in \{ 0, 1 \}^n \}$. 

In the following lemma, we prove that if an encoding reveals no information about the parity of any subset, then the encoding reveals no information about the string. This is intuitively an obvious statement that we rigorously prove below.

\begin{lemma} 
If an encoding $\set{ \sigma_s : s \in \zo^n }$ satisfies $\mathop{\E}_{s \sim \mu(\zo^n)} \Pr[\textup{learn } s_S] = \half$, 
for every subset $S \subseteq [n] \setminus \emptyset$, then $\sigma_s = \sigma_{s'}$ for all $s, s' \in \zo^n$. 
\end{lemma}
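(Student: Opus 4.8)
The plan is to think of the encoding $\{\sigma_s : s \in \zo^n\}$ as a function from $\zo^n$ into the (real vector space of) Hermitian operators on $\spa B$ and to use Fourier analysis over the group $\Z_2^n$. Write $\sigma_s = \sum_{S \subseteq [n]} \widehat\sigma(S)\,\chi_S(s)$, where $\chi_S(s) = (-1)^{s_S}$ is the character indexed by $S$ and $\widehat\sigma(S) = \frac{1}{2^n}\sum_{s}\chi_S(s)\sigma_s$ is an operator-valued Fourier coefficient. The statement $\sigma_s = \sigma_{s'}$ for all $s,s'$ is exactly the statement that $\widehat\sigma(S) = 0$ for every nonempty $S$, so this is what I would aim to prove.

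The key step is to translate the hypothesis ``$\E_{s}\Pr[\text{learn }s_S] = \tfrac12$ for every nonempty $S$'' into ``$\widehat\sigma(S) = 0$ for every nonempty $S$''. Fix a nonempty $S$. Bob's best strategy for guessing $s_S$ from $\sigma_s$ is to apply a two-outcome POVM $\{M_0, M_1 = I - M_0\}$; his average success probability is $\frac{1}{2^n}\sum_{s}\Tr(M_{s_S}\sigma_s)$. A short manipulation rewrites this as $\frac12 + \frac12\Tr\!\big((2M_0 - I)\,\widehat\sigma(S)\big)$ (grouping the $s$ with $s_S=0$ against those with $s_S=1$ and recognizing the difference of the two partial averages as $2\widehat\sigma(S)$). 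So the hypothesis that this probability equals $\tfrac12$ for \emph{every} choice of POVM forces $\Tr(E\,\widehat\sigma(S)) = 0$ for every $E$ with $-I \preceq E \preceq I$; since such $E$ span the whole space of Hermitian operators on $\spa B$, we conclude $\widehat\sigma(S) = 0$. (One must be slightly careful: the hypothesis gives that the \emph{optimal} success probability over measurements is $\tfrac12$, equivalently that it is $\le \tfrac12$; but optimizing $\Tr((2M_0-I)\widehat\sigma(S))$ over POVMs gives the trace norm $\|\widehat\sigma(S)\|_1$, so the hypothesis says $\|\widehat\sigma(S)\|_1 = 0$, which already yields $\widehat\sigma(S) = 0$ directly.) Running this for all nonempty $S$ kills all the nonconstant Fourier coefficients, and the Fourier inversion formula then gives $\sigma_s = \widehat\sigma(\emptyset)$ for all $s$, which is the claim.

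The main obstacle is making the connection between ``guessing the parity $s_S$'' and the single Fourier coefficient $\widehat\sigma(S)$ fully precise — in particular, being careful that ``learning $s_S$ with probability $\tfrac12$'' means the \emph{optimal} guessing probability is $\tfrac12$, and correctly identifying the optimal-measurement bias with a norm of $\widehat\sigma(S)$ so that the vanishing of the bias genuinely forces the operator to vanish. Everything else (the Fourier expansion over $\Z_2^n$, linearity, the fact that $\{E : -I \preceq E \preceq I\}$ spans the Hermitian operators) is routine. An alternative, essentially equivalent route avoids naming Fourier coefficients: induct on $|S|$ to show that for every $S$ the two partial averages $\frac{1}{2^{n-1}}\sum_{s:s_S=0}\sigma_s$ and $\frac{1}{2^{n-1}}\sum_{s:s_S=1}\sigma_s$ are equal, and then combine these $2^n - 1$ linear identities (one per nonempty $S$) to solve for each individual $\sigma_s$ in terms of the global average; but the Fourier phrasing packages exactly this linear algebra and I expect it to be the cleanest to write.
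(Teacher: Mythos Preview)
Your proof is correct and is in fact cleaner than the paper's. Both arguments ultimately rest on Fourier analysis over $\Z_2^n$, but they apply it to different objects. The paper argues by contradiction: assuming some $\sigma_s \neq \sigma_{s'}$, it manufactures a balanced subset $T \subseteq \zo^n$ with $\sigma_T \neq \sigma_{\bar T}$, picks a measurement distinguishing these with positive bias, and then Fourier-expands the \emph{scalar} indicator function of $T$ to conclude that the same measurement must have nonzero bias for some parity $s_S$. You instead Fourier-expand the \emph{operator-valued} map $s \mapsto \sigma_s$ and observe directly that the hypothesis, via the Helstrom bound, forces each nontrivial coefficient $\widehat\sigma(S) = \tfrac12(\rho_S^0 - \rho_S^1)$ to vanish. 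Your route is shorter and avoids the somewhat ad hoc construction of $T$; the paper's route has the minor advantage that it never needs to invoke the optimal-discrimination bound explicitly, only that \emph{some} measurement achieves positive bias when two averaged states differ. Your parenthetical remark handling the ``optimal probability equals $\tfrac12$'' subtlety via $\|\widehat\sigma(S)\|_1$ is exactly right and is the one place where care is genuinely needed.
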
 

\begin{proof}
Suppose for a contradiction that there exists $s, s' \in \zo^n$ such that $\sigma_s \neq \sigma_{s'}$. Then there exists a subset $T {\subseteq} \zo^{n}$ of size $2^{n-1}$ such that $\sigma_T = \frac{1}{2^{n-1}} \sum_{s \in T} \sigma_s $ is not equal to $\sigma_{\bar{T}} = \frac{1}{2^{n-1}} \sum_{s \in \bar{T}} \sigma_s$, {where ${\bar T}$ denotes the complement of the set $T$}. To see this, take any subset $T {\subseteq} \zo^{n}$ of size $2^{n-1}$; if $\sigma_T = \sigma_{\bar{T}}$, then we can find $s \in T$ and $s' \in \bar{T}$ such that $\sigma_s \neq \sigma_{s'}$, since all the $\sigma_i$ are not equal. We consider the subset $T'$ where we add $\{s'\}$ and remove $\{s\}$ from $T$ to obtain $\sigma_{T'} \neq \sigma_{\bar{T'}}$. 
 
This means that there exists a two-outcome measurement {$\{ M_T, M_{\bar T} \}$} that outputs 1 if $s \in T$ and $-1$ otherwise, with positive bias. We now show for a contradiction that this measurement must also output a parity of some nonempty subset with positive bias. 
Define the function $f:\{0,1\}^n \rightarrow \{-1,+1\}$ as the indicator function of $T$ {and let $b$ be the expectation over the measurement outcomes when measuring $\sigma_s$ with $\{ M_T, M_{\bar T} \}$, so $b(s) := \tr(\sigma_s M_{T}) - \tr(\sigma_s M_{\bar{T}})$}. Then 
\[
\mathop{\E}_{s \sim \mu(\zo^n)} [{b(s)} \cdot f(s)] > 0. 
\]
By taking the Fourier representation of the function, we have 
\[ \mathop{\E}_{s \sim \mu(\zo^n)} [{b(s)} \cdot f(s)] 
= 
\mathop{\E}_{s \sim \mu(\zo^n)} \!\! \left[ {b(s)} \cdot \sum_{S \subseteq [n]} \hat{f}(S) \, (-1)^{s_S} \right] \!\! 
= 
\sum_{S \subseteq [n]} \hat{f}(S) \mathop{\E}_{s \sim \mu(\zo^n)} \left[ {b(s)} \cdot (-1)^{s_S} \right] 
>  
0. \] 
Note that $\hat{f}(\emptyset) = \mathop{\E}[f(s)] = 0$,  {because $|T| = |\bar{T}|$}, 
implying that there exists a non-empty subset $S$ for which 
\[ \mathop{\E}_{s \sim \mu(\zo^n)} [{b(s)} \cdot (-1)^{s_S}] \neq 0, \] 
which is a contradiction. 
\end{proof}

The above statement means that for each $s$, we have $\Tr_{\spa{OA}} \kb{\Omega_s} = \Tr_{\spa{OA}}\kb{\Omega_0}$. In particular, for any $s \in \zo^n$ there exists a unitary $U_s$ acting on $\spa{OA}$ such that $(U_s \otimes I)\ket{\Omega_0} = \ket{\Omega_s}$. We use the state $\ket{\Omega_0}$ to define the $\Indn$ strategy:

\begin{itemize}
\item Alice and Bob share the state $\ket{\Omega_0} \in \calA \otimes \calB$.
\item Upon receiving $s \in \zo^{n}$, Alice applies $U_s$ on $\spa{OA}$ such that Alice and Bob share $\ket{\Omega_s}$. Alice measures register $\spa{O}$ in the computational basis and outputs the measurement outcome $a$. 
\item For Alice's input $s$ and output $a$, Bob has an encoding $\rho_{x}$ where $x := s \oplus \boldsymbol{a}$ occurs uniformly at random. Upon receiving $t \in [n]$, Bob measures $\calB$ just as in the $\RAC$ to learn $x_t$.  He outputs $b$ equal to his guess. 
\item Alice and Bob win the game if $b = s_t \oplus a = x_t$ meaning that they win the game if and only if Bob correctly guesses $x_t$. 
\end{itemize}

Since the $\RAC$ has average-case bias $\alpha$, we see that with this $\Indn$ strategy, they succeed with probability
\begin{eqnarray*} 
& & \mathop{\E}_{s \sim \mu(\zo^n)} \mathop{\E}_{t \sim \mu([n])}
\Pr[\textup{Alice's output } a \textup{ and Bob's output } b \textup{ satisfy } a \oplus b = s_t] \\
& = & 
\mathop{\E}_{x \sim \mu(\zo^n)} \mathop{\E}_{t \sim \mu([n])}
\Pr[\textup{Bob correctly outputs } x_t \textup{ from the } \set{\rho_x : x \in \zo^n} \RAC] \\
& = & \half (1 + \alpha), 
\end{eqnarray*} 
as desired. 
 
\subsection{From $\Ind$ strategies to $\RAC$} 
\label{ssec:GameToProtocol} 

Suppose Alice and Bob have a strategy to win the $\Indn$ game with bias $\alpha$ with starting state ${\ket{\psi} \in \calA \otimes \calB}$. On input $s \in \zo^n$, Alice performs on her side the corresponding measurement which generates her outcome $a$. We assume that $a$ is uniformly random and independent of $s$ (which can be guaranteed by taking the XOR with an independently and uniformly random bit that is shared with Bob). 
Let $\rho_{s,a}$ be the state that Bob has when Alice has input  $s$ and outputs $a$
{and define the $\RAC$ $\{ \sigma_x : x \in \{ 0, 1 \}^n \}$ where $\sigma_x := \half \rho_{x,0} + \half \rho_{\bar{x}, 1}$ for each $x \in \{ 0, 1 \}^n$. 
}
We now show that $\{ \sigma_x : x \in \zo^n \}$ is an even-parity-oblivious $\RAC$ with average-case bias $\alpha$. {Note that $\half \rho_{s,0} + \half \rho_{s,1}$ is independent of $s$ by the no-signalling principle. For convenience, define $\rho := \half \rho_{s,0} + \half \rho_{s,1}$ for any $s \in \{ 0, 1 \}^n$.}
\begin{enumerate}
\item It hides the even parities: Let $S \subseteq \{ 0, 1 \}^n$ be a subset of even size and $b \in \{ 0, 1 \}$ be an arbitrary bit. Then we have $x_S=\bar{x}_S$ for any $x \in \{ 0, 1 \}^n$, since $|S|$ is even. Bob's reduced state, averaged over all $x \in \{ 0, 1 \}^n$ such that $x_S = b$, is given by 
\[ 
\frac{1}{2^{n-1}} \sum_{x : x_S = b} \sigma_x 
=
\frac{1}{2^{n}} \sum_{x : x_S = b} \rho_{x,0} 
+ 
\frac{1}{2^{n}} \sum_{x : x_S = b} \rho_{\bar{x},1}  \\  
=
\frac{1}{2^{n}} \sum_{x : x_S = b} \rho_{x,0} 
+ 
\frac{1}{2^{n}} \sum_{x : x_S = b} \rho_{x,1}  \\
=
\rho, 
\] 
which is independent of $b$, thus proving the $\RAC$   is even-parity-oblivious. 
 
\item Since Alice and Bob win the $\Indn$ game with  {average-case bias} $\alpha$, {we know that
\[ 
\half(1 + \alpha) 
= 
\mathop{\E}_{a \sim \mu(\zo)} \mathop{\E}_{s \sim \mu(\zo^n)} \mathop{\E}_{t \sim \mu([n])} \Pr[\text{Bob learns } s_t \oplus a \text{ from } \rho_{s,a}]. \] 
By defining $x := s \oplus \boldsymbol{a}$, we can write the above as}  
\begin{eqnarray*} 
\half(1 + \alpha) 
& = & 
\mathop{\E}_{a \sim \mu(\zo)} \mathop{\E}_{x \sim \mu(\zo^n)} \mathop{\E}_{t \sim \mu([n])} \Pr[\text{Bob learns } x_t \text{ from } \rho_{x \oplus \boldsymbol{a}, a}] \\ 
& = &
\mathop{\E}_{x \sim \mu(\zo^n)} \mathop{\E}_{t \sim \mu([n])} \Pr[\text{Bob learns } x_t \text{ from } \sigma_x] 
\end{eqnarray*} 
{as desired.} 
\end{enumerate} 

{Note that in the proof above, we are treating $x$, the string Alice wishes to encode, as $s \oplus \boldsymbol{a}$. We now remark that some of the odd parities of $x$ may not be hidden from Bob. For example, if in the $\Ind$ game Alice simply outputs $a = s_1 \oplus s_2 \oplus s_3 \oplus d$, where $d$ is the uniformly random bit Alice and Bob share to make $a$ independent of $s$, then we have $x_1 \oplus x_2 \oplus x_3 = d$ and Bob would know this odd parity exactly. However, the even parities of $x$ are equal to those of $s$ which are hidden by the no-signalling principle.} 

\begin{remark} 
The above equivalence also holds in the classical setting.  
\end{remark} 
 
\section{On the structure of optimal $\Ind$ game strategies} \label{sec:GameBias} 

In this section, we prove Theorems~\ref{thm:QGameBias} and \ref{thm:CGameBias}, that the optimal quantum bias of an $\Indn$ strategy is $1/\sqrt{n}$ and the optimal classical bias of an $\Indn$ strategy is $\sqrt{\frac{2}{\pi n}}(1 + O(1/n))$.

\subsection{The quantum bias}


The quantum bias of any XOR game can be found efficiently by solving a semidefinite program (SDP) \cite{CSUU08}. The optimization takes place over a matrix indexed by $s \in S$ and $t \in T$ with each entry corresponding to the expectation of the measurement outcome of a fixed game strategy. Such a matrix of inner products can be written as a positive semidefinite matrix and the expectation (or bias) of the game strategy is then an inner product of this matrix and one containing the information of the XOR game. 

Specifically, the quantum bias of the $\Indn$ game can be calculated as the optimal value of either SDP below 

\vspace{0.25cm}
\begin{center}
  \begin{minipage}{2in}
    \centerline{\underline{Primal problem (P)}}\vspace{-7mm}
    \begin{align*}
	     \text{supremum:} \quad & \inner{B}{X} \\
  		\text{subject to:} \quad & \diag(X) = e \\
  		& X \succeq 0 
		\end{align*}
  \end{minipage}
  \hspace*{25mm}
  \begin{minipage}{2in}
    \centerline{\underline{Dual problem (D)}}\vspace{-7mm}
		\begin{align*}
			\text{infimum:} \quad & \inner{e}{y} \\
  		\text{subject to:} \quad & \Diag(y) \succeq B \\
  	\end{align*}	
  \end{minipage}
\end{center}
where 
\begin{itemize}
\item $\diag(X)$ is the vector on the diagonal of the square matrix $X$, 
\item $e$ is the vector of all ones, 
\item $\Diag(y)$ is the diagonal matrix with the vector $y$ on the diagonal, 
\item $B := \dfrac{1}{2} \left[ \begin{array}{cc} 0 & A \\ A^\top & 0 \end{array} \right]$, \; where $A_{s,t} := \dfrac{(-1)^{s_t}}{n 2^{n}}$. 
\end{itemize} 

For (P), consider the positive semidefinite matrix $X := YY^{\top}$, where 
\[ 
Y := 
\left[ \begin{array}{c} \sqrt{n} \, 2^{n} A 
\\ I_T \end{array} \right] \; . 
\] 

To show $X$ is feasible in (P), one can check that each diagonal entry of $X$ is equal to $1$ from the definition of $A$ above. Note that $\inner{B}{{X}} := \sqrt{n} \, 2^{n} \inner{A}{A} = 1/\sqrt{n}$ proving that the quantum bias is at least $1/\sqrt{n}$ (since the quantum bias is the maximum of $\inner{B}{X}$ over all feasible $X$). 

For (D), let ${y} := \left[ \begin{array}{c} u \, e_S \\ v \, e_T \end{array} \right]$ where $u,v > 0$ (determined later) and $e_S$ and $e_T$ are the vectors of all ones indexed by entries in $S$ and $T$, respectively. Then 
\[ \Diag({y}) \succeq B \iff \left[ \begin{array}{cc} u I_S & - \frac{1}{2} A \\ - \frac{1}{2} A^{\top} & v I_T \end{array} \right] \succeq 0 \iff u v I_T \succeq \frac{1}{4} A^{\top} A = \dfrac{1}{4 n^2 2^{n}} I_T \iff u v \geq \dfrac{1}{4 n^2 2^{n}}. \] 
From above, if we set $v := \dfrac{1}{2 n \sqrt{n}}$ and $u := \dfrac{1}{2 \sqrt{n} \, 2^{n}}$, then ${y}$ is feasible in (D). Since 
\[ {\inner{e}{{y}} = 2^{n} u + nv = \dfrac{1}{\sqrt n}}, \] 
we know the quantum bias is at most $1/\sqrt{n}$ (since the quantum bias is  equal to the minimum of $\inner{e}{y}$ over all feasible $y$). Therefore, the quantum bias is exactly $1/\sqrt{n}$, as required. 
 
The $\Ind$ game turns out to be equivalent to the Retrieval game studied in \cite{OW10} which is defined similarly except the first bit of Alice's input is always $0$ and the other $n-1$ bits are chosen independently and uniformly at random. To see the equivalence, notice that in the $\Ind$ game Alice can take her input $s \in \zo^n$, define $s' = \boldsymbol{m} \oplus s$, where $m$ fixes the specific bit to a specific value, play the Retrieval game strategy with input $s'$ to generate $a'$, and then  output $a := a' \oplus m$ (Bob plays the same strategy). Thus, any strategy for the Retrieval game with bias $\alpha$ yields a strategy for the $\Ind$ game with bias $\alpha$ as well.  
We further remark that the quantum bias of the Retrieval game is shown to be $1/\sqrt{n}$ in \cite{OW10} through the use of uncertainty relations. Using this result, and the equivalence to the $\Ind$ game, we have another proof that the quantum bias of the INDEX game is $1/\sqrt{n}$.  
  
\subsection{The classical bias}\label{Section:ClassicalValue} 
 
We can assume without loss of generality that Alice and Bob's strategies are deterministic. Define ${b \in \{0,1\}^n}$ as the string of potential answers Bob gives where $b_t$ is the bit that Bob outputs on input $t \in [n]$. Now let us examine Alice's strategy. For a fixed input $s$, if she outputs $1$, they win the game with probability 
\[ \mathop{\E}_{t \sim \mu([n])} \Pr[b_t \neq s_t] = \frac{1}{n} |b \oplus s|_H, \] 
where $|x|_H$ denotes the Hamming weight of a string $x \in \zo^n$. If she outputs $0$, they win the game with probability 
\[ \mathop{\E}_{t \sim \mu([n])} \Pr[b_t = s_t] = 1 - \frac{1}{n}|b \oplus s|_H. \] 
Since their strategies are deterministic, Alice should output the maximum of these two, so 
\[ 
\max \left\{ \frac{1}{n} |b \oplus s|_H, 1 - \frac{1}{n} |b \oplus s|_H \right\} 
= \half + \left| \half - \frac{1}{n} |b \oplus s|_H \right|
= \half + \frac{1}{2} \cdot \frac{2}{n} \left| \frac{n}{2} - |b \oplus s|_H \right| . \] 
Therefore, the classical bias is {\em precisely} 
$\frac{2}{n} \mathop{\E}_{s \sim \mu(\zo^n)} \left| \frac{n}{2} - |b \oplus s|_H \right|$. {Note that this quantity is independent of $b$, thus we could assume Bob always outputs $0$ for every input.} The quantity 
\[ \mathop{\E}_{s \sim \mu(\zo^n)} \left| \frac{n}{2} - |b \oplus s|_H \right| \] 
corresponds to the mean deviation of the uniform binomial distribution. This is a well studied quantity  \cite{Fra45} and we know that
\[ \mathop{\E}_{s \sim \mu(\zo^n)} \left[ \left| \frac{n}{2} - \left| b \oplus s \right|_H \right| \right] =  \sqrt{\frac{n}{2\pi}} \left( 1 + O \left( \frac{1}{n} \right) \right). \] 
Therefore, the classical bias is 
$\frac{2}{n} \sqrt{\frac{n}{2\pi}} \left( 1 + O \left( \frac{1}{n} \right) \right) = \sqrt{\frac{2}{\pi n}}(1 + O(\frac{1}{n}))$, as desired. 

\section{A construction of a quantum $\POMn$ with optimal bias} \label{sec:OptEnc}
  
In this section, we give an explicit construction of a quantum $\POMn$ with optimal bias.

\begin{lemma}[Optimal $\POMn$] \label{lem:OptEnc}
For any {integer $n \geq 2$}, there exists a $\POMn$ with bias $1/\sqrt{n}$ that uses $\lfloor n/2 \rfloor$ qubits. 
\end{lemma}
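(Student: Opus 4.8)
The plan is to construct the $\POMn$ explicitly from an optimal quantum $\Indn$ strategy, using the primal SDP solution already exhibited in Subsection~\ref{sec:GameBias}. Recall that the feasible matrix $X = YY^\top$ with $Y = \left[ \begin{array}{c} \sqrt{n}\,2^n A \\ I_T \end{array} \right]$ achieves the optimal bias $1/\sqrt{n}$, and by the tight SDP characterization of XOR games \cite{CSUU08} (which is constructive: a PSD matrix of Gram vectors of the right dimension gives rise to actual quantum measurement operators via Tsirelson's Theorem \cite{Tsi87}), this yields a genuine quantum strategy for $\Indn$ with bias $1/\sqrt{n}$. Concretely, the rows of $Y$ give unit vectors $\{ \ket{u_s} \}_{s \in \zo^n}$ for Alice and $\{ \ket{v_t} \}_{t \in [n]}$ for Bob in some real Hilbert space, and Tsirelson's construction turns these into $\pm 1$-valued observables on a maximally entangled state whose correlation $\bra{\psi} A_s \otimes B_t \ket{\psi}$ equals $\inner{u_s}{v_t}$. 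The dimension of this entangled state is $2^{\lceil \dim / 2 \rceil}$, and because the $\ket{v_t}$ only span an $n$-dimensional subspace (they are literally the standard basis $I_T$), one can take the shared state to live on $\lfloor n/2 \rfloor$ qubits per side — this is where the qubit count in the statement comes from.

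Next I would push this strategy through the ``$\Ind$ strategies to $\RAC$'' reduction of Subsection~\ref{ssec:GameToProtocol}. That reduction, applied verbatim, produces an even-parity-oblivious $\RAC$ $\{ \sigma_x : x \in \zo^n\}$ with average-case bias $1/\sqrt{n}$, where $\sigma_x = \tfrac12 \rho_{x,0} + \tfrac12 \rho_{\bar x, 1}$ and $\rho_{s,a}$ is Bob's post-measurement state. Two things remain to upgrade this to a $\POMn$ (worst-case bias, all parities of size $\ge 2$ hidden). First, symmetry: in the SDP-derived strategy the correlation $\inner{u_s}{v_t}$ depends on $s,t$ only through $s_t$ — indeed $A_{s,t} = (-1)^{s_t}/(n2^n)$ — so the decoding bias for bit $t$ is the \emph{same} for every $t$, hence the average-case bias equals the worst-case bias. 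Second, hiding odd parities: the reduction only guarantees even parities are hidden, and as the remark after the reduction notes, odd parities may leak. The fix — and this is the role of the ``hyperbit''/classical-bit trick mentioned in the introduction — is to additionally have Alice XOR her output $a$ with a fresh uniformly random bit $d$ and append $d$ to the encoding (one classical bit). Then $x = s \oplus \boldsymbol{(a \oplus d)}$... wait, more carefully: one sends $(\sigma'_x, c)$ where $c$ is a classical bit chosen so that conditioning on $c$ the encoding of any odd parity is uniform; averaging over $c$ one checks $\rho_S^0 = \rho_S^1$ for \emph{all} nonempty $S$ with $|S| \ge 2$, giving full parity-obliviousness, while Bob still decodes each single bit with bias $1/\sqrt{n}$ since $c$ can be folded into his guess.

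The main obstacle I anticipate is making the qubit-count claim $\lfloor n/2 \rfloor$ precise and clean: Tsirelson's theorem as usually stated gives a bound like $2^{\lceil k/2 \rceil}$ qubits where $k$ is the dimension of the vector space spanned by \emph{both} families of vectors, and one must argue that only the $n$-dimensional span of Bob's vectors matters (Alice's vectors can be projected onto it without changing any inner product $\inner{u_s}{v_t}$), and then handle the parity of $n$ — for odd $n$ one gets $\lfloor n/2 \rfloor$ rather than $\lceil n/2 \rceil$ by a small optimization in the Clifford-algebra representation. A secondary subtlety is verifying that tacking on the classical bit $d$ (or $c$) genuinely kills \emph{all} odd parities of size $\ge 3$ simultaneously while not disturbing even-parity-obliviousness or the single-bit decoding bias; this is a direct computation with the definition of parity-obliviousness, analogous to (but slightly more involved than) item~1 of the reduction in Subsection~\ref{ssec:GameToProtocol}. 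I would present the explicit Clifford/Pauli-based strategy directly rather than invoking Tsirelson abstractly, since that simultaneously pins down the qubit count and makes the obliviousness computations transparent.
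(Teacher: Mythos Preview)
Your high-level plan — realize the primal SDP solution via Tsirelson's Clifford-algebra observables, then pass through the $\Indn \to \RAC$ reduction of Subsection~\ref{ssec:GameToProtocol} — is essentially what the paper does to get the encoding states and the $1/\sqrt{n}$ bias, including the careful recursive definition of the $G_{n,i}$ that pins down the qubit count at $\lfloor n/2 \rfloor$. Your observation that the correlations depend only on $s_t$, so worst-case equals average-case bias, is correct and is implicitly how the paper proceeds.

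The genuine gap is your treatment of \emph{odd}-parity-obliviousness. The reduction of Subsection~\ref{ssec:GameToProtocol} only yields even-parity-obliviousness, and your proposed ``classical-bit trick'' does not repair this. Concretely, if you append a uniformly random bit $d$ and replace $\sigma_x$ by $\sigma'_x := \tfrac12 \sigma_x \otimes \kb{0} + \tfrac12 \sigma_{\bar x} \otimes \kb{1}$, then for odd $|S|$ one computes
\[
\sum_{x: x_S = b} \sigma'_x \;=\; \tfrac12 \!\!\sum_{x: x_S=b}\!\! \sigma_x \otimes \kb{0} \;+\; \tfrac12 \!\!\sum_{y: y_S = b\oplus 1}\!\! \sigma_y \otimes \kb{1},
\]
so independence from $b$ on the $\kb{0}$ block is exactly the odd-parity-obliviousness of $\{\sigma_x\}$ that you have not established. (This is the same phenomenon the paper points out at the end of Subsection~\ref{ssec:GameToProtocol} and again in Subsection~\ref{ssect:DI}: any such XOR-with-a-shared-bit manipulation is transparent to Bob and cannot erase odd-parity information.) More generally, no generic post-processing of an even-parity-oblivious $\RAC$ will hide odd parities; you need something specific to the Clifford construction.

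The paper's argument for full parity-obliviousness is different and short: it invokes the ``hyperbit'' inequality of Pawlowski--Winter~\cite{PW12}, which says that for any protocol of this shape (shared entanglement plus one classical bit), the decoding biases satisfy $\sum_{\emptyset \neq S \subseteq [n]} \alpha_S^2 \le 1$. Since the singleton biases already give $\sum_{|S|=1} \alpha_S^2 = n \cdot (1/\sqrt{n})^2 = 1$, every $\alpha_S$ with $|S| \ge 2$ is forced to zero. If you prefer to avoid citing that inequality, a direct alternative is to note that Bob's post-measurement state is $\tau_{x,a} \propto I + a A_x^{\top}$ and compute $\sum_{x: x_S = b} A_x = \tfrac{1}{\sqrt n}\sum_i G_{n,i} \sum_{x: x_S=b} (-1)^{x_i} = 0$ for every $|S|\ge 2$; either route closes the gap, but the classical-bit trick does not.
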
 
 
Our construction builds upon the previously mentioned $\RAC$s for sending $2$ (resp. $3$) classical bits with bias $1/\sqrt{2}$ (resp. $1/\sqrt{3}$). These are the vertices from the corners of a square inscribed in an equatorial plane in the Bloch sphere, and the corners of a cube inscribed in the Bloch sphere, respectively. To generalize this idea to an $n$-cube inscribed in an $n$-dimensional sphere, we use the intuition of \emph{hyperbits}, which are a way to visualize such unit vectors in a quantum mechanical setting. A full discussion of hyperbits and their equivalence to certain quantum protocols is beyond the scope of this paper, but we refer the interested reader to the work of Pawlowski and Winter~\cite{PW12}. 

We note that, after the publication of this paper, we became aware that a similar $\RAC$ had been previously discovered by Wehner~\cite{W08}, but remained unpublished.

\subsection{The construction}
 
Our construction is very similar to a proof of Tsirelson's Theorem~\cite{Tsi87}. We start by recursively defining the  observables $G_{n,1}, \ldots, G_{n,n}$, for $n \geq 2$, which are used to define the actions of Alice and Bob in the $\POMn$. 
For $n=2$ and $n=3$, we define  
\[ G_{2,1} := X, \quad G_{2,2} := Y 
\qquad \textup{and} \qquad 
G_{3,1} := X, \quad G_{3,2} := Y, \quad G_{3,3} := Z. \] 
We use the $n=3$ observables as a base case for a recursive formula: 
\begin{align*}
n \textup{ even} : & \quad G_{n,i} := G_{n-1, i} \otimes X,  \; \text{ for } \; i \in \{ 1, \ldots, n-1 \}, & \quad G_{n,n} = I \otimes Y,  \quad & \quad \\  
n \textup{ odd} : & \quad G_{n,i} := G_{n-2, i} \otimes X, \; \text{ for } \; i \in \{ 1, \ldots, n-2 \}, & \quad G_{n,n-1} = I \otimes Y, & \quad G_{n,n} = I \otimes Z. 
\end{align*}

Note that these act on $\lfloor n/2 \rfloor$ qubits,\footnote{We note here that the choice of these observables is not unique and there are applications in the literature that use slightly different observables. However, this particular choice reduces the $\RAC$ dimension by one qubit when $n$ is odd. For example, for $n=3$ our $\RAC$ uses $\lfloor n/2 \rfloor = 1$ qubit (as opposed to {$2$}) just as in the well-known quantum $\RAC$ of three classical bits into one qubit.} have eigenvalues $\pm 1$, and satisfy the anti-commutation relation 
\[  \{ G_{n,i}, G_{n,j} \} = 2 \delta_{i,j} I. \] 
Define the following operators for $x \in \{ 0, 1 \}^n$ and $t \in [n]$: 
\[ A_x := \frac{1}{\sqrt{n}} \sum_{i=1}^n (-1)^{x_i} G_{n,i} \quad \textup{ and } \quad B_t := G_{n,t}^{\top}. \] 
Note that $A_x^2 = I$, for all $x \in \zo^n$, and $B_t^2 = I$, for all $t \in [n]$, so each have $\pm 1$ eigenvalues.

The $\POMn$ protocol is defined below. 
\begin{itemize} 
\item Encoding states: Alice chooses a uniformly random $x \in \zo^n$, creates $\lfloor n/2 \rfloor$ EPR pairs, and measures the first ``halves'' with the observable $A_x$ to get an outcome $a \in \{ -1, +1 \}$. {The second ``halves'' now contain the post-measurement state $\tau_{x,a}$ and since $\tr(A_x) = 0$, each $a$ occurs with $1/2$ probability. She sends $\tau_{x,a}$ and $a$ to Bob who now has the mixed state} 
\[ {\rho_x := \half \sum_a \tau_{x,a} \otimes \kb{a}}\] 
encoding the string $x$.   
\item Decoding procedure: If Bob wishes to learn $x_t$, he measures his EPR halves with the observable $B_t$ to get an outcome $b \in \{ -1, +1 \}$ {and also measures to learn $a$}. He computes $c = ab$ and outputs $0$ if $c = +1$, and $1$ otherwise. 
\end{itemize} 
 
In the next two lemmas, we show that the worst-case bias of this $\RAC$ is $\frac{1}{\sqrt{n}}$ and that it is parity-oblivious, thereby proving Lemma~\ref{lem:OptEnc}.
   
\begin{lemma} 
The quantum $\RAC$ $\{ \rho_x : x \in \{ 0, 1\}^n \}$ has worst-case bias at least $1/\sqrt{n}$. 
\end{lemma}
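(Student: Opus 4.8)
The plan is to compute the success probability of the decoding procedure directly and show it equals $\frac12(1 + 1/\sqrt{n})$ for every index $t$, so that the worst-case bias is exactly $1/\sqrt{n}$ (in particular, at least $1/\sqrt{n}$). The key observation is that the quantity $c = ab$ that Bob computes is the product of two $\pm1$ outcomes, one from Alice measuring her EPR halves with $A_x$ and one from Bob measuring his halves with $B_t = G_{n,t}^\top$. Using the standard EPR-pair identity, measuring the two halves of $\lfloor n/2 \rfloor$ maximally entangled qubit pairs with observables $A_x$ (on Alice's side) and $G_{n,t}^\top$ (on Bob's side) yields outcomes $a,b$ with $\E[ab] = \frac{1}{2^{\lfloor n/2\rfloor}}\tr(A_x G_{n,t})$, i.e. the normalized Hilbert–Schmidt inner product $\langle A_x, G_{n,t}\rangle$ with respect to the maximally mixed state.

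First I would establish that identity (or cite it as the well-known property of the state $\ket{\psi} = \frac{1}{\sqrt d}\sum_i \ket{i}\ket{i}$ that $(M \otimes N)\ket{\psi}$ has $\bra\psi (M\otimes N)\ket\psi = \frac1d \tr(M N^\top)$). Then I would plug in $A_x = \frac{1}{\sqrt n}\sum_{i=1}^n (-1)^{x_i} G_{n,i}$ and use the anticommutation/normalization relation $\{G_{n,i}, G_{n,j}\} = 2\delta_{i,j} I$, which gives $\tr(G_{n,i} G_{n,j}) = 2^{\lfloor n/2\rfloor}\delta_{i,j}$ since each $G_{n,i}$ is a $\pm1$ observable on $\lfloor n/2\rfloor$ qubits and distinct ones multiply to a traceless operator. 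Hence
\[
\E[ab] = \frac{1}{2^{\lfloor n/2\rfloor}}\tr(A_x G_{n,t}) = \frac{1}{\sqrt n}\cdot \frac{1}{2^{\lfloor n/2\rfloor}}\sum_{i=1}^n (-1)^{x_i}\tr(G_{n,i}G_{n,t}) = \frac{(-1)^{x_t}}{\sqrt n}.
\]

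Next I would translate this expectation into a decoding probability. Bob outputs $0$ when $c = ab = +1$ and $1$ otherwise, so $\Pr[\text{output} = x_t] = \Pr[ab = (-1)^{x_t}] = \frac12(1 + (-1)^{x_t}\E[ab]) = \frac12(1 + 1/\sqrt n)$, using $(-1)^{x_t}\cdot(-1)^{x_t} = 1$. This holds for every $x$ and every $t \in [n]$, so $\alpha_t = 1/\sqrt n$ for all $t$ and the worst-case bias $\min_t \alpha_t$ is $1/\sqrt n$. I should also note that the extra classical register $\kb{a}$ that Bob receives lets him recover $a$ exactly, so the computation of $c = ab$ is legitimate.

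The main obstacle is getting the anticommutation-relation bookkeeping exactly right: I need to verify that the recursively defined $G_{n,i}$ really do satisfy $\{G_{n,i},G_{n,j}\} = 2\delta_{i,j}I$ (an easy induction: tensoring all generators of a Clifford-like family with a fixed anticommuting $X$ on a fresh qubit, plus adjoining $I\otimes Y$ and $I\otimes Z$ in the odd case, preserves pairwise anticommutation and squares), and that this forces $\tr(G_{n,i}G_{n,j}) = 0$ for $i \neq j$ — which follows because $G_{n,i}G_{n,j} = -G_{n,j}G_{n,i}$ makes the product traceless by cyclicity of trace. Everything else is a short direct calculation; the transpose in $B_t = G_{n,t}^\top$ is precisely what makes the EPR-correlation formula produce $\tr(A_x G_{n,t})$ rather than $\tr(A_x G_{n,t}^\top)$, so I would be careful to state the EPR identity in the form that accounts for it.
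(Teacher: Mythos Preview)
Your proposal is correct and follows essentially the same approach as the paper: compute $\bra{\psi} A_x \otimes B_t \ket{\psi}$ on the maximally entangled state, use the anticommutation relation $\{G_{n,i},G_{n,j}\} = 2\delta_{i,j}I$ to reduce this to $(-1)^{x_t}/\sqrt{n}$, and convert the expectation of $ab$ into the decoding probability $\frac12(1+1/\sqrt{n})$. Your write-up is in fact more explicit than the paper's in justifying the EPR transpose identity and the trace-orthogonality $\tr(G_{n,i}G_{n,j}) = 2^{\lfloor n/2\rfloor}\delta_{i,j}$, but the argument is the same.
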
 

\begin{proof} 
We can assume at the beginning of the protocol, Alice and Bob share the maximally entangled state
\[ \ket{\psi} := \frac{1}{\sqrt{2^{\lfloor\frac{n}{2}\rfloor}}}\sum_{j=1}^{2^{\lfloor \frac{n}{2}\rfloor}} \ket{j}_\mathcal{A}\ket{j}_\mathcal{B}.\]
The expectation value of the observable $C = A_x \otimes B_t$ in this state is given by:
\[ 
\langle C \rangle = \bra{\psi} A_x \otimes B_t \ket{\psi} 
= 
\frac{1}{\sqrt{n}} \frac{1}{2^{\lfloor\frac{n}{2}\rfloor}} 
\sum_{i=1}^n (-1)^{x_i} 
\underbrace{\sum_{j,k = 1}^{2^{\lfloor \frac{n}{2}\rfloor}} \bra{j}_\mathcal{A}\bra{j}_\mathcal{B} \, G_{n,i} \otimes G_{n,t}^{\top} \, \ket{k}_\mathcal{A}\ket{k}_\mathcal{B}}_{= 2^{\lfloor\frac{n}{2}\rfloor}  \delta_{i,t}} 
= 
\frac{(-1)^{x_t}}{\sqrt{n}} \] 
where the third equality is derived from the anti-commutation relation. We can write 
\[ \langle C \rangle = \text{Pr}[c = +1] - \text{Pr}[c = -1] = \bra{\psi} A_x \otimes B_t \ket{\psi} \] 
implying 
\[ \Pr[\text{Bob outputs 0}] = \Pr[c = +1] = \frac{1}{2}\left[1 + \frac{(-1)^{x_t}}{\sqrt{n}}\right] \]
\[ \Pr[\text{Bob outputs 1}] = \Pr[c = -1] = \frac{1}{2}\left[1 - \frac{(-1)^{x_t}}{\sqrt{n}}\right]. \]
This proves that 
\[ \Pr[\text{Bob outputs $x_t$}] = \frac{1}{2}\left(1 + \frac{1}{\sqrt{n}}\right), \] 
as desired. 
\end{proof} 

\begin{lemma} 
The quantum $\RAC$ $\{ \rho_x : x \in \{ 0, 1\}^n \}$ is parity-oblivious. 
\end{lemma}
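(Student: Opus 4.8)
The plan is to verify the defining condition of parity-obliviousness directly: for every subset $S \subseteq [n]$ with $|S| \geq 2$, I must show that $\frac{1}{2^{n-1}} \sum_{x : x_S = 0} \rho_x = \frac{1}{2^{n-1}} \sum_{x : x_S = 1} \rho_x$. Since $\rho_x = \half \sum_a \tau_{x,a} \otimes \kb{a}$ where $\tau_{x,a}$ is Bob's post-measurement state when Alice measures her EPR halves with $A_x$ and obtains outcome $a$, it suffices to control the averaged post-measurement states. The key observation is that $\tau_{x,a}$ is (proportional to) $\tr_{\calA}\big((P^a_x \otimes I)\kb{\psi}\big)$ where $P^a_x = \half(I + a A_x)$ is the projector onto the $a$-eigenspace of $A_x$. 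Using the fact that $\kb{\psi}$ is maximally entangled, so that $\tr_{\calA}((M \otimes I)\kb{\psi}) = \frac{1}{d} M^{\top}$ (with $d = 2^{\lfloor n/2\rfloor}$), I get $\tau_{x,a} \propto (P^a_x)^{\top} = \half(I + a A_x^{\top})$. Hence the $a$-summed, $S$-parity-averaged encoding reduces to an expression whose only nontrivial part is $\frac{1}{2^{n-1}} \sum_{x : x_S = b} A_x^{\top}$, tensored appropriately with the classical register, and I need this to be independent of $b \in \{0,1\}$.

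First I would set up this reduction carefully, reducing the claim to showing that $\sum_{x : x_S = 0} A_x = \sum_{x : x_S = 1} A_x$ for all $|S| \geq 2$. Then, substituting $A_x = \frac{1}{\sqrt n} \sum_{i=1}^n (-1)^{x_i} G_{n,i}$, the sum over $x$ with $x_S = b$ fixed becomes $\frac{1}{\sqrt n}\sum_{i} G_{n,i} \sum_{x : x_S = b}(-1)^{x_i}$. The inner sum $\sum_{x : x_S = b}(-1)^{x_i}$ over the $2^{n-1}$ strings with $S$-parity equal to $b$ is a standard character-sum computation: it equals $0$ whenever $\{i\}$ is neither empty nor equal to $S$ as a Fourier character on the parity constraint — and since $|S| \geq 2$, the singleton $\{i\}$ is never equal to $S$, so every term vanishes regardless of $b$. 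Thus $\sum_{x : x_S = b} A_x = 0$ for both $b=0$ and $b=1$, which trivially gives the required equality. (Concretely: fixing $x_S = b$ leaves $n-1$ free bits; if $i \notin S$, the bit $x_i$ is free and the sum over it of $(-1)^{x_i}$ is zero; if $i \in S$, then since $|S|\geq 2$ there is another index $j \in S$ whose bit remains free after the parity constraint, and summing over $x_j$ kills the term.)

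I would then present the argument in the order: (i) express $\rho_x$ via the projectors $P^a_x$ and the maximally entangled state identity; (ii) average over $x$ with $x_S = b$, pushing the average inside to land on $\sum_{x:x_S=b} A_x^\top$; (iii) do the character-sum computation showing this is $0$ for $|S| \geq 2$, independent of $b$; (iv) conclude the two averaged encodings are both equal to $\frac{1}{2}\,\frac{1}{d}\, I \otimes (\kb{0}+\kb{1})/2$ or the analogous $b$-free expression, hence equal to each other. The main obstacle — though it is more bookkeeping than genuine difficulty — is handling the classical register $\kb{a}$ correctly: one must make sure that after averaging over $x$ with the parity constraint, the cross term $a A_x^\top$ inside $\half(I + aA_x^\top)$ is the only $a$-dependent and $x$-dependent piece, so that the vanishing of $\sum_{x:x_S=b}A_x^\top$ makes the whole remaining expression manifestly independent of $b$. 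Everything else is a direct consequence of the anti-commutation relations (which were already used to normalize $A_x^2 = I$) and elementary Fourier analysis over $\zo^n$.
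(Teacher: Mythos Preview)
Your argument is correct, and it takes a genuinely different route from the paper's proof. The paper does not compute $\rho_x$ explicitly at all; instead it invokes the ``hyperbit'' inequality of Pawlowski--Winter~\cite{PW12}, which says that for any protocol consisting of shared entanglement plus one classical bit, the biases $\alpha_S$ for learning $x_S$ satisfy $\sum_{S \neq \emptyset} \alpha_S^2 \leq 1$. Since the previous lemma already gives $\alpha_{\{i\}} = 1/\sqrt{n}$ for each $i$, the singleton terms alone saturate the bound, forcing $\alpha_S = 0$ for every $|S| \geq 2$. Your approach, by contrast, uses the maximally-entangled-state identity $\tr_{\calA}((M \otimes I)\kb{\psi}) = \tfrac{1}{d} M^{\top}$ to write $\tau_{x,a} = \tfrac{1}{d}(I + a A_x^{\top})$ explicitly, and then reduces parity-obliviousness to the character sum $\sum_{x:x_S=b}(-1)^{x_i} = 0$ for $|S|\geq 2$, which you dispatch cleanly. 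The trade-off: your proof is fully self-contained and makes transparent exactly which structural feature of the $A_x$ (linearity in the characters $(-1)^{x_i}$) drives parity-obliviousness, whereas the paper's saturation argument is shorter and, as a bonus, shows that \emph{any} encoding in this one-classical-bit model achieving singleton bias $1/\sqrt{n}$ is automatically parity-oblivious, not just this particular construction. One cosmetic remark: in your step (iv) the averaged state works out to $\tfrac{I}{d} \otimes \tfrac{1}{2}(\kb{+1}+\kb{-1})$, without the extra factor of $\tfrac{1}{2}$ you wrote, but this does not affect the argument since you are only asserting equality of the two sides.
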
 

\begin{proof} 
Protocols involving shared entanglement and sending one classical bit have limited guessing probabilities for functions such as parity~\cite{PW12}. In particular, it can be shown that the biases $\alpha_S$ of learning $x_S$ satisfy 
\begin{equation*} 
\sum_{S \subseteq \{ 0,1 \}^n \setminus \emptyset} \alpha_S^2 \le 1. \label{eqn:hyperbitbound}
\end{equation*} 
In the $\RAC$ above, we have  
\[ \sum_{S: |S| = 1} \alpha_S^2 \geq n \cdot \left(\frac{1}{\sqrt{n}}\right)^2 = 1 \] 
implying $\alpha_S = 0$ for all $S$ of size $2$ or greater, implying it is parity-oblivious. 
\end{proof} 

This concludes a construction of a quantum $\POMn$ with optimal bias. However, we have not yet proved Theorem~\ref{main} since the encoding dimension is too high. We now discuss a small modification to simultaneously reduce the dimension of the $\RAC$ and to increase its device-independence.   

\subsection{Removing the classical message} 
\label{ssect:dimred} 

Reducing the dimension of the $\POMn$ $\{ \rho_x : x \in \zo^n \}$ is straightforward. First notice that Bob simply takes his measurement outcome and changes it if $a = -1$ to obtain his guess for $x_t$. We can remove the need for this message if Alice simply changes the value of $x$ for which Bob has the encoding. In other words, instead of sending $a$ to Bob, she just switches every bit of $x$ if $a = -1$. Then Bob's guess $b$ is just as accurate in guessing $x'_t$ where $x'_t = x_t$, if $a = +1$, and $x'_t = \overline{x_t}$, if $a = -1$. (Note that this is similar to what was done in Subsection \ref{ssec:GameToProtocol}). Therefore, Bob now has an encoding of $x'$, which we denote by $\{ \sigma_{x'} : x' \in \{ 0, 1 \}^n \}$. Notice that $\sigma_{x'}$ is a state on only $\lfloor n/2 \rfloor$ qubits which coincides with the optimal $\POMn$ for $n=2$ and $n=3$ previously discussed. 

It is easy to see that this new $\RAC$ has bias at least $1/\sqrt{n}$ by construction. We now argue that this quantum $\RAC$ is still parity-oblivious.  
  
\begin{lemma} 
The quantum $\RAC$ $\{ \sigma_{x'} : x \in \{ 0, 1 \}^n \}$ is parity-oblivious.  
\end{lemma}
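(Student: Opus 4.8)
The plan is to show that the dimension-reduced encoding $\{\sigma_{x'} : x' \in \zo^n\}$ inherits parity-obliviousness from the original encoding $\{\rho_x : x \in \zo^n\}$, which was shown parity-oblivious in the previous lemma. The relationship between the two encodings is that $\sigma_{x'}$ is obtained by averaging over Alice's measurement outcome $a \in \{-1,+1\}$: concretely, $\sigma_{x'} = \half \tau_{x',+1} + \half \tau_{\bar{x}',-1}$, where $\tau_{x,a}$ is Bob's post-measurement state when Alice's string was $x$ and outcome was $a$. (Here $\rho_x = \half \tau_{x,+1}\otimes\kb{+1} + \half\tau_{x,-1}\otimes\kb{-1}$, so $\tau_{x,a}$ is just the $\spa{B}$-part of $\rho_x$ conditioned on $a$.) This is structurally identical to the reduction in Subsection~\ref{ssec:GameToProtocol}, where $\sigma_x = \half\rho_{x,0} + \half\rho_{\bar x,1}$ was shown to be even-parity-oblivious.

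First I would fix an arbitrary subset $S \subseteq [n]$ of size $2$ or greater and a bit $b \in \zo$, and compute Bob's reduced state averaged over all $x'$ with $x'_S = b$:
\[
\frac{1}{2^{n-1}} \sum_{x' : x'_S = b} \sigma_{x'}
=
\frac{1}{2^n} \sum_{x' : x'_S = b} \tau_{x',+1}
+
\frac{1}{2^n} \sum_{x' : x'_S = b} \tau_{\bar{x}',-1}.
\]
Next I would reindex the second sum by $y = \bar{x}'$. Since $|S| \geq 2$, I would split into the two cases used elsewhere in the paper: if $|S|$ is even then $\bar{x}'_S = x'_S$, so the second sum ranges over $\{y : y_S = b\}$; if $|S|$ is odd then $\bar{x}'_S = x'_S \oplus 1$, so it ranges over $\{y : y_S = \bar b\}$. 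In either case the result is some fixed combination of the quantities $\frac{1}{2^{n-1}}\sum_{x:x_S=c}\tau_{x,+1}$ and $\frac{1}{2^{n-1}}\sum_{x:x_S=c}\tau_{x,-1}$ for $c\in\zo$, which are precisely the $\spa{B}$-parts of the blocks of $\frac{1}{2^{n-1}}\sum_{x:x_S=c}\rho_x$. By the previous lemma (parity-obliviousness of $\{\rho_x\}$), $\frac{1}{2^{n-1}}\sum_{x:x_S=0}\rho_x = \frac{1}{2^{n-1}}\sum_{x:x_S=1}\rho_x$, and taking the $\spa{B}$-marginal of each $\kb{a}$-block gives $\frac{1}{2^{n-1}}\sum_{x:x_S=0}\tau_{x,a} = \frac{1}{2^{n-1}}\sum_{x:x_S=1}\tau_{x,a}$ for each $a$. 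Plugging this in shows the displayed expression is independent of $b$, which is exactly the defining condition of parity-obliviousness.

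The main obstacle, such as it is, is purely bookkeeping: correctly tracking the complement map $x' \mapsto \bar{x}'$ through the sum and the even/odd case distinction, and being careful that "parity-oblivious" here requires hiding $x_S$ for all $|S| \geq 2$ — so unlike the $\Ind$-to-$\RAC$ reduction, there is no escape hatch for odd parities, and I must genuinely use the full strength of the previous lemma (which gave $\alpha_S = 0$ for all $|S|\geq 2$ via the hyperbit bound) rather than just an even-parity statement. An alternative, even shorter route would be to observe directly that $\sum_{a}\tau_{x,a}$ and the relabeled $\sum_a \tau_{\bar x,a}$ already enjoy the parity-obliviousness inherited from $\{\rho_x\}$, so the convex combination defining $\sigma_{x'}$ does too; but I would present the explicit sum manipulation above since it mirrors the style already established in Section~\ref{sec:Equiv} and leaves nothing to the reader.
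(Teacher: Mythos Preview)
Your proposal is correct and takes essentially the same approach as the paper: both extract the block-wise equalities $\sum_{x:x_S=0}\tau_{x,a} = \sum_{x:x_S=1}\tau_{x,a}$ for each $a$ from the parity-obliviousness of $\{\rho_x\}$, then combine them to show $\sum_{x:x_S=0}\sigma_x = \sum_{x:x_S=1}\sigma_x$. The paper is terser (it cites the block equalities and writes the chain of equalities in one line without spelling out the even/odd reindexing of the $\tau_{\bar x,1}$ sum), whereas you make that case split explicit, but the substance is identical.
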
 

\begin{proof} 
We prove that any $S$-parity hidden from Bob in the quantum $\RAC$ $\{ \rho_x : x \in \zo^n \}$ is still hidden from Bob in the quantum $\RAC$ $\{ \sigma_{x'} : {x'} \in \zo \}$. 
Let $\tau_{x,a}$ be Bob's post-measured state immediately after Alice used measurement $A_x$ and received output $a$. We can write 
\[ 
\rho_x := \frac{\tau_{x,0} \otimes \kb{0}}{2}  + \frac{\tau_{x,1} \otimes \kb{1}}{2} 
\quad \text{ and } \quad 
\sigma_{x} := \frac{\tau_{x,0} + \tau_{\bar{x},1}}{2}.  
\] 
Fix a subset $S \subseteq [n]$ of size at least $2$. Since $x_S$ is hidden from Bob in the $\RAC$ $\{ \rho_x : x \in \zo^n \}$, we have $\sum_{x \, : \, x_S = 0} \rho_x = \sum_{x \, : \, x_S = 1} \rho_x$. Thus,  
\[  
\left( \sum_{x \, : \, x_S = 0} \tau_{x,0} \right) \otimes \frac{\kb{0}}{2} + 
\left( \sum_{x \, : \, x_S = 0} \tau_{x,1} \right) \otimes \frac{\kb{1}}{2}  
=
\left( \sum_{x \, : \, x_S = 1}  \tau_{x,0} \right) \otimes \frac{\kb{0}}{2} + 
\left( \sum_{x \, : \, x_S = 1}  \tau_{x,1} \right) \otimes \frac{\kb{1}}{2} 
\]
implying that 
\begin{equation} 
\sum_{x \, : \, x_S = 0} \tau_{x,0} = \sum_{x \, : \, x_S = 1} \tau_{x,0} \quad \text{ and } \quad \sum_{x \, : \, x_S = 0} \tau_{x,1} = \sum_{x \, : \, x_S = 1} \tau_{x,1}. 
\label{eqn:tauequiv}
\end{equation} 
Now we can write 
\begin{eqnarray*}
\sum_{x \, : \, x_S = 0} \sigma_x 
& = & 
\sum_{x \, : \, x_S = 0} \tau_{x,0} 
+  
\sum_{x \, : \, x_S = 0} \tau_{\bar{x},1} 
= 
\sum_{x \, : \, x_S = 1} \tau_{x,0} 
+  
\sum_{x \, : \, x_S = 1} \tau_{\bar{x},1} 
=  
\sum_{x \, : \, x_S = 1} \sigma_x 
\end{eqnarray*} 
using Equation~(\ref{eqn:tauequiv}) implying that the quantum  $\RAC$ $\{ \sigma_x : x \in \zo^n \}$ also hides $x_S$ from Bob. 
\end{proof} 

Since the quantum $\POMn$ $\{ \sigma_x : x \in \zo^n \}$ has optimal bias and uses only $\lfloor n/2 \rfloor$ qubits, 
this concludes the proof of our main result, Theorem~\ref{main}.  
 
\subsection{{Making our optimal $\POMn$ device-independent}} \label{ssect:DI}

We first point out that our preparation of the $\POMn$ $\{ \rho_x : x \in \zo^n \}$ is not secure in the device-independent model for the following reason. Suppose the measurements are not trusted, in the sense that Bob controls them. Consider the case when Alice's measurement, which depends on $x$, simply outputs $a := x_1 \oplus x_2$. Then, when $a$ is sent to Bob, he now knows some information about the parities of $x$. 

Note that in the preparation of the $\RAC$ $\{ \sigma_{x'} : x' \in \{ 0, 1 \}^n \}$ the even parities of $x'$ are hidden from Bob by the no-signalling principle since they are equal to those for $x$ and there is no communication from Alice to Bob in the preparation of $\sigma_{x'}$ after Alice {chooses} $x$. This is the basis for our preparation of a quantum $\RAC$ secure in the device-independent security model. 
 
There is one small caveat however that does not affect the security, but may change how $x'$ is generated. That is, it may not be generated uniformly now. Consider the case when Bob controls the measurement and decides to set $a := x_1$. Then, $\sigma_{x'}$ will never be prepared if $x'_1 = 1$. However, there is an easy fix at the price of adding one classical bit of communication. 

Before sending Bob's part of the (supposed) maximally entangled state, Alice can choose a random bit $d$ and send that to Bob as well. (We assume Alice can flip a random coin without being effected from a malicious Bob.) Then Alice takes the XOR of $d$ with her measurement outcome, and proceeds as usual. Bob can easily adjust his guess for $x'_t$ using the value of $d$. 

It is easy to see that this preparation of the $\POMn$ $\{ \sigma_{x'} : x' \in \zo^n \}$ hides the even parities, in the device-independent model, against any malicious Bob respecting the no-signalling principle, thus proving Theorem~\ref{thm:DI}. 
   
It is not the case {that} the preparation of the $\POMn$ $\{ \sigma_{x'} : x' \in \zo^n \}$ hides the odd parities as well in the device-independent model. For example, suppose Bob controls the measurement such that $a := x_1 \oplus x_2 \oplus x_3$. Then 
\[ x'_1 \oplus x'_2 \oplus x'_3 = (x_1 \oplus a \oplus d) \oplus (x_2 \oplus a \oplus d) \oplus (x_3 \oplus a \oplus d) = d \]
for any choice of $x$. Thus, if Bob controls the measurements, he can make it so that Alice's first three bits always have parity $d$, {of} which he knows the value! 

We leave it as an open problem to see if there exists a  preparation of an optimal $\POMn$ that is still parity-oblivious in the device-independent model against either no-signalling or quantum adversaries. 
  
\section*{Conclusion}   

We have provided the optimal bounds for parity-oblivious quantum $\RAC$s and showed that they are asymptotically better than the optimal classical ones. We discussed how these optimal $\RAC$s provide a large non-contextuality inequality violation and resolve the main open problem in a work of Spekkens, Buzacott, Keehn, Toner, and Pryde~\cite{SBKTP09}. We also studied optimal bounds for a related version of these $\RAC$s which only hide the even parities. We showed their equivalence to a non-local game and explained why even-parity-obliviousness was the correct notion of $\RAC$ in this setting. After constructing a family of optimal parity-oblivious $\RAC$s, we discussed how to make the even parities secure in the device-independent model.

We end with an open question. We have seen how even-parity-obliviousness plays a key role in our analysis,  especially when making $\RAC$s secure in the device-independent setting. This raises the question: What can be said about the optimal bias for \emph{odd}-parity-oblivious $\RAC$s? We have discussed how they can have greater bias than the two variants studied in this work, but perhaps the optimal bias can still be expressed as some nice function of the number of encoded bits. Also, it would be interesting to see if they can be made secure in the device-independent model as well.

\section*{Acknowledgements} 

This research was supported by the French National Research Agency, through CRYQ (ANR-09-JCJC-0067) and by the
European Union through the ERC project QCC. 


\nocite{VW11}
\nocite{CKS14}
\nocite{Tsi87}
\nocite{PW12}
\nocite{SBKTP09}
\nocite{Nay99}
\nocite{CSUU08}
\nocite{CHSH69} 

\bibliographystyle{alpha} 
\bibliography{paper}


\end{document}